\newcommand{\rr}{\ensuremath{\mathbb{R}}}
\newcommand{\cost}{\text{cost}}
\tikzstyle{place}=[circle,draw=black,fill=gray!15,thick,inner sep=0pt,minimum size=6mm]
\newtheorem{asum}{Assumption}
\newcommand{\tail}{\ensuremath{\text{tail}}}
\newcommand{\head}{\ensuremath{\text{head}}}
\newcommand{\loc}{\ensuremath{\text{loc}}}
\newcommand{\ca}{\ensuremath{\text{cap}}}
\newcommand{\prob}[3]%
  { \vspace{.25cm}
    \noindent\fbox{\begin{minipage}{.98\textwidth}
      \textsc{#1}

      \vspace{.25cm}
      \begin{center}
        \begin{tabular}{lp{.8\textwidth}}
          Input:  & #2 \\
          Task: & #3
        \end{tabular}
      \end{center}
    \end{minipage}}

    \vspace{.25cm}}%
    \noindent\fbox{\begin{minipage}{.98\textwidth}
      \textsc{#1}

      \vspace{.25cm}
      \begin{center}
        \begin{tabular}{lp{.8\textwidth}}
          Input:  & #2 \\
          Output: & #3
        \end{tabular}
      \end{center}
   \end{minipage}}
\begin{document}

\title{Flows over time in time-varying networks: \thanks{This work is supported by Berlin Mathematical School and DFG project SK58/7-1.}
}
\subtitle{Optimality conditions and strong duality}

\titlerunning{Optimality conditions and strong duality for MCFP}        

\author{Ronald Koch\and Ebrahim Nasrabadi
}

\authorrunning{R. Koch \and E. Nasrabadi} 

\institute{R.~Koch \at
           Institut f\"{u}r Mathematik, Technische Universit\"{a}t Berlin, Stra\ss e des 17. Juni 136, 10623 Berlin, Germany,
\email{koch@math.tu-berlin.de} \and
E. Nasrabadi\at
Operations Research Center, Massachusetts Institute of Technology, Cambridge, Massachusetts 02139, \email{nasrabad@mit.edu}\\
The work was done when the author was at Institut f\"{u}r Mathematik, Technische Universit\"{a}t Berlin.
}

\date{}

\maketitle

\begin{abstract}

There has been much research on network flows over time due to their important role in real world applications. This has led to many results, but the more challenging continuous time model still lacks some of the key concepts and techniques that are the cornerstones of static network flows. The aim of this paper is to advance the state of the art for dynamic network flows by developing the continuous time analogues of the theory for static network flows. Specifically, we make use of ideas from the static case to establish a reduced cost optimality condition, a negative cycle optimality condition, and a strong duality result for a very general class of network flows over time.

\keywords{Flows over time\and Continuous linear programming\and Optimality conditions\and Duality}
\subclass{90B10\and 49N05 \and 05C38 \and 90C46}
\end{abstract}

\section{Introduction}
In various applications of network flows, flow values on arcs are not constant but may change over time due, e.g., to seasonal altering demands, supplies and arc capacities. Moreover, flow does not travel instantaneously through a network but requires a certain amount of time to travel through each arc. These aspects are captured by \emph{network flows over time} (also called \emph{dynamic network flows}), which were introduced by Ford and Fulkerson~\cite{FordFulkerson58,FordFulkerson62} in the 1950's. Since then, a large number of authors have studied different features of network flow over time models (see \cite{Skutella-Korte09} and the references therein).

Research on flows over time has taken two approaches depending on whether a discrete or continuous representation of time is used \cite{FleischerTardos98}. Research on the discrete-time model typically uses the time-expanded network, either explicitly in the algorithms, or implicitly in order to proof efficiency of theoretically or practically algorithms (see ,e.g., \cite{FleischerSkut-SICOMP}). Research on the continuous-time model has been mostly conducted by Anderson, Philpott and Pullan (see, e.g.,~\cite{Anderson89,AndersonPhilpott89,AndersonPhilpott94,Philpott82,Philpott90,Pullan93,Pullan97}), who consider networks with time-varying parameters. They mainly focus on proving the existence and characterization of optimal solutions, establishing a duality theory, and developing solution algorithms. 

In this paper, we study the \emph{Minimum Cost Flow over time Problem} (hereafter called MCFP for brevity) in the continuous time model. Here the task is to find a minimum cost flow to satisfy (time-varying) demands through a capacitated network, in which arc costs can vary with time, each arc has a transit time, and storage (with a corresponding cost) is allowed at the nodes.  This problem was first introduced by Anderson~\cite{Anderson89}, who characterizes extreme point solutions for the problem given rational transit times. Anderson and Philpott \cite{AndersonPhilpott94} review results relating to MCFP. They introduce a dual problem for MCFP with a corresponding definition of complementary slackness and prove a weak duality result.

A closely related problem to MCFP is the maximum flow over time problem in the continuous-time model. The aim of this problem is to send as much flow as possible  through a capacitated network from a source to a sink within a given time interval. This problem was studied by Anderson \emph{et al.} \cite{AndersonNashPhilpott82}. They introduce the concept of continuous-time cuts and establish a MaxFlow-MinCut theorem (see also \cite{AndersonNash87}) for the case that transit times are zero and the transit capacities are bounded measurable. This result was later extended to arbitrary transit times by Philpott \cite{Philpott90} and to a general model combining both discrete and continuous aspects of flows over time into a single model by Koch \emph{et al.}~\cite{KochNasrSkut11}.

In the absence of transit times, MCFP becomes a special type of {\em Separated Continuous Linear Programs (SCLP)} (see \cite{Anderson78}). Pullan~\cite{Pullan97} considers a more general class of SCLP with time-delays, so-called {\em Separated Continuous
Linear Programs with Time-Delays (SCLPTD)}.  For the case that transit times are rational, Pullan~\cite{Pullan97} transforms SCLPTD into a larger
problem which is very close to a special class of SCLP and
extends some results of SCLP to SCLPTD. In particular, he observes that the main results for SCLP can be extended with ease to give a similar theorem for SCLPTD. 

The common approach in solving CLP as well as SCLP is to approximate the original problem with a finite-dimensional linear program by discretization of time. 
This approach has attracted most of the attention for solving practical problems for the following reasons:
\begin{enumerate}
\item Discretization of time leads to static problems that can be solved by using traditional methods.
\item The discrete approximated solutions converge to an optimal solution for the original problem as the discretization becomes finer.
\end{enumerate}


While discretization-based algorithms are mainly used in practice due to these observations, the size of resulting discrete approximations is enormous, which leads to long computation times. Consequently, a number of authors attempted to generalize the simplex method to solve instances of SCLP as well as MCFP without discretization. In particular, Anderson and Philpott~\cite{AndersonPhilpott89} attempt to develop a simplex algorithm for MCFP with zero transit times and piecewise constant/linear input functions. They discuss how the simplex method can be developed for MCFP to directly produce an exact solution, rather than doing a discretization to get an approximation to the optimal solution. But, there are no guarantees for the convergence of this algorithm and it often produces a sequence of solutions which converge to a suboptimal solution. In the most recent paper, Weiss~\cite{Weiss08} examines SCLP with piecewise linear problem data and develops a simplex algorithm that gives an exact solution after a finite number of iterations. He also characterizes the form of optimal solutions and establishes a strong duality result.

So far we reviewed the literature on network flows over time where transit times are assumed to be time-invariant. There is also a number of models that allow the transit times to vary over time in both discrete-time model \cite{CaiKloksWong97,CaiShaWong01,Miller-HooksPatterson04,Miller-HooksPatterson04,Opasanon06,Miller98,NasrHash10,Wen2013} and continuous-time model \cite{HashNasr11,Orda90,Orda91,OrdaRom95}. In particular, Miller-Hooks and Patterson \cite{Miller-HooksPatterson04} present a pseudo-polynomial time algorithm for the problem of sending a given amount of flow from a single source to a single sink at the shortest possible time in a time-varying network, where all parameters can change at discrete points in time. They also present a technique for converting a network with multiple sources and multiple sinks into an equivalent single source and sink network by adding a small number of nodes and arcs to the existing network. Cai \emph{et al.} \cite{CaiShaWong01} study the problem of sending a given amount of flow from a single source to a single sink with minimum cost in a time-varying network. They present a kind of the successive shortest path algorithm for solving the problem that runs in pseudo-polynomial-time. Nasrabadi and Hashemi \cite{NasrHash10} extent this model to multiple sources and multiple sinks in which the supplies at source nodes and demands at sink nodes may vary with time. They present a discrete-time version of the successive shortest
path algorithm for solving the resulting problem.


\paragraph{Our contribution.}
In this paper, we are concerned with the development of continuous-time analogues to those concepts and techniques which are the cornerstones of static network flows. Specifically, we develop several network based optimality conditions analogous to that found in static network flows for MCFP with piecewise analytic input functions and rational transit times. We derive a strong duality result from these optimality conditions. 

It is worth pointing out that previously, strong duality was developed by Pullan \cite{Pullan96,Pullan97} for SCLP given piecewise analytic problem data and for SCLPTD with rational transit times and piecewise constant/linear input functions. The main result of his paper is a strong duality theorem for SCLP with piecewise analytic data. He first showed that strong duality holds for SCLP under the conditions of piecewise constant/linear problem data in his original paper \cite{Pullan93} as a consequence of his elegant algorithm. This result was the starting point of an extensive duality theory in \cite{Pullan96}. In this paper, we do not follow the Pullan's approach, but rather we make use of ideas from the area of static network flows. 

The remainder of this paper is organized as follows. Section~\ref{sec:preliminaries} provides preliminaries and earlier results that are required for the purpose of the paper. In Section~\ref{sex:SAP}, we introduce the concept of augmenting paths and cycles, and prove the existence of least cost augmenting paths. We then establish a reduced cost optimality condition, a negative cycle optimality condition, and a strong duality result for \mbox{MCFP} in Section~\ref{sec:OptCond}. Section \ref{sec:con} is devoted to our conclusions. Our results in Sections~\ref{sex:SAP} and~\ref{sec:OptCond} are based on the assumption that the cost functions are continuous. In  Appendix~\ref{sec:appendix}, we show that this assumption makes no restriction and all our results can be extended to the case where cost functions have some discontinuities.

\section{Preliminaries}
\label{sec:preliminaries}

In this section, we give a formal description of MCFP and present some existing results, required throughout the paper.

We are given a directed graph $G=(V,E)$ with node set $V$ and arc set $E$ and a time horizon~$T>0$. We denote an arc $e$ from node $v$ to node $w$ by $e=(v,w)$ and assume, without loss of generality, that there is at most one arc between any pair of nodes in~$G$. Each arc~$e$ is associated with two functions; the {\em transit cost}~$c_e:[0,T]\to\rr$ and the {\em transit capacity}~$u_e:[0,T]\to\rr_{\geq 0}$. Here, $c_{e}(\theta)$ gives the cost per flow unit for sending flow in arc~$e$ at time~$\theta$ and $u_{e}(\theta)$ gives an upper bound on the rate (i.e., amount of flow per time unit) at which flow can enter arc~$e$ at time~$\theta$. In addition, for each arc~$e$, a {\em transit time} $\tau_{e}\in\rr$ is given. The transit time is the amount of time required to send flow from the tail to the head of $e$. Thus flow entering arc~$e=(v,w)$ at time~$\theta$ arrives at node~$w$ at time~$\theta+\tau_e$.

Each node~$v$ is associated with three functions; the {\em storage cost}~$c_v:[0,T]\to\rr$, the {\em storage capacity}~$U_v:[0,T]\to\rr_{\geq 0}$, and the {\em supply/demand rate}~${b_v:[0,T]\to\rr}$. Here~$c_v(\theta)$ is the cost per time unit for storing one unit of flow at node~$v$ at time~$\theta$ and~$U_v(\theta)$ is an upper bound on the amount of flow that can be stored at node~$v$ at time~$\theta$. Moreover, depending on whether $b_v(\theta)>0$ or $b_v(\theta)<0$, the value~${|b_v(\theta)|}$ denotes the supply or demand rate of flow at node $v$ at time~$\theta$. Thus, the amount of available supply or required demand up to time $\theta$ at node $v$ equals ${B_v(\theta):=\int_{0}^{\theta}b_v(\vartheta)\,d(\vartheta)}$.  In addition, there may be an initial storage $s_v\in\rr_{\geq 0}$ at node $v$. In this case, the total supply or demand at node $v$ up to time $\theta$ is given by $B_v(\theta):=s_v+\int_{0}^{\theta}b_v(\vartheta)\,d(\vartheta)$.

A \emph{flow over time} (or simply \emph{flow}) $x$ is described by Lebesgue-measurable  functions
\begin{align*}
x_e:[0,T)\longrightarrow \rr_{\geq 0} &&\forall e\in E~.
\end{align*}
The value $x_{e}(\theta)$ denotes the rate of flow entering arc $e$ at the point in time~$\theta$. Therefore, the amount of flow entering arc $e$ up to time $\theta$ equals $X_e(\theta):=\int_{0}^{\theta}x_e(\vartheta)\,d\vartheta$. We let $X_e-\tau$ to be a \emph{shifted} function defined by $(X_e-\tau)(\theta):=X_e(\theta-\tau)$ for each~${\theta\in [0,T)}$. The flow~$x$ induces a storage function $Y_v:[0,T)\longrightarrow \rr$ at node $v$ by the {\em flow conservation constraint}
\begin{align}
\label{eq:FCC}
Y_v:=B_v-\sum_{e\in\delta^+(v)} X_e-\sum_{e\in\delta^-(v)} (X_e-\tau_e)~,
\end{align}
where the value $Y_v(\theta)$ measures the amount of flow stored at node $v$ at time~$\theta$. Here and throughout the paper, the notation $\delta^+(v)$ and $\delta^-(v)$ are used to denote the set of all arcs leaving and entering node $v$, respectively. Obviously, the functions $X$, $B$, $Y$ are absolutely continuous. Moreover, we can assume without loss of generality that $U$ is absolutely continuous as well. That is why we have used a capital letter to denote these functions in order to distinguish them from bounded measurable functions.

The aim of MCFP is to find a flow over time that satisfies all demands and obeys
all transit and storage capacity constraints over the time
interval $[0,T]$, while minimizing the total transit and storage costs.
This problem is formulated as an infinite-dimensional
linear program with a network structure and arc time-delays as below:
\begin{align}
\label{pro:MCFP}
  \tag{MCFP}
  \begin{aligned}
    &\min && \int_{0}^{T}\sum_{e\in E} c_e(\theta) x_e(\theta)\,d\theta+\int_{0}^{T}\sum_{v\in V} c_v(\theta) Y_v(\theta)\,d\theta\\
    &\text{s.t.}
      &&\begin{aligned}[t]
        \sum_{e\in\delta^+(v)} X_e-\sum_{e\in\delta^-(v)} (X_e-\tau_e)+Y_v&= B_v &&\forall v\in V~,\\
        0 \leq x_e &\leq u_e &&\forall e\in E~,\\
        0 \leq Y_v &\leq U_v &&\forall v\in V~.
      \end{aligned}
  \end{aligned}
\end{align}
This formulation is equivalent to that given by Anderson \cite{Anderson89} and provides a very general model for network flow over time problems . 

We say that flow $x$ (with corresponding storage $Y$) is \emph{feasible} if the pair $x$ and $Y$ satisfies the constraints of \eqref{pro:MCFP}.
We require to work within the space $L_\infty([0,T])$ of essentially bounded measurable functions on $[0,T]$ in which equivalent functions differ only on a set of measure zero. In particular, $c_e$, $u_e$, and $x_e$ for all $e\in E$ and $c_v$, $b_v$ for all $v\in V$ belong to $L_\infty([0,T])$. Hence, the {\em feasible region} $\mathcal{F}$ of MCFP is defined as
\begin{align*}
\mathcal{F}&:=\left\{x\in L_\infty^{E}[0,T] \;\big|\; \text{$x$ with corresponding storage $Y$ is feasible for ~\eqref{pro:MCFP}} \right\}.
\end{align*}
We assume that $\mathcal{F}$ is not empty. This guarantees the existence of an optimum solution for~\eqref{pro:MCFP} at an
extreme point of~$\mathcal{F}$ (see~\cite[Theorem 3.1]{Pullan97}).

Anderson and Philpott~\cite{AndersonPhilpott94} develop a dual problem for~\eqref{pro:MCFP} in an analogous manner to that described for static network flows. Before presenting the dual problem, let us assume without loss of generality that the storage costs are zero and there is no initial storage at nodes. Then, the dual problem can be formulated as follows:
\begin{align}
\label{pro:MCFP*}
\tag{MCFP$^*$}
  \begin{aligned}
    &\max && \int_{0}^{T}\sum_{v\in V}b_v(\theta)\pi_v(\theta)\,d\theta-\int_{0}^{T}\sum_{v\in V}{U_v(\theta)\,d\pi_v^{-}(\theta)}+\int_{0}^{T}\sum_{v\in V}{u_e(\theta)\rho_e(\theta)}\,d\theta\\
    &\text{s.t.}
      &&\begin{aligned}[t]
        \pi_{v}(\theta)-\pi_w(\theta+\tau_e)+\rho_{e}(\theta)&\leq c_{e}(\theta) &&\forall e=(v,w)\in E,~\theta\in [0,T)~,\\
   \text{$\pi_v$ of bounded variation and right}&\\
  \text{continuous on $[0,T]$ with $\pi_v(T)$} &=0~&& \forall v\in V~,\\
       \rho_e &\leq 0 &&\forall e\in E~. \\
       \end{aligned}
  \end{aligned}
\end{align}
This formulation requires explanation. Since $\pi_v$ is of
bounded variation, there exist two functions $\pi_v^{+}$ and $\pi_v^{-}$, known as the \emph{Jordan decomposition} of $\pi_v$, that are monotonic increasing on $[0,T]$ with $\pi_v(\theta)=\pi_v^{+}(\theta)-\pi_v^{-}(\theta)$ for $\theta\in [0,T]$ (see, e.g., \cite[Chapter 6 ]{Apostol74}). The functions $\pi_v^{+}$ and $\pi_v^{-}$ are called the positive and negative part of $\pi_v$, respectively. In the objective function, $\pi_v^{-}$ denotes the negative part of $\pi_v$ and the notation $\int_{0}^{T}U_v(\theta)\,d\pi_v^{-}(\theta)$ denotes the Lebesgue-Stieltjes integral of function $U_v$ with respect to function $\pi_v^{-}$ from $0$ to $T$.

For each $e\in E$, the dual variable $\rho_e$ can be eliminated from~\eqref{pro:MCFP*} since it appears in the objective function integrated with $u_e$ which is nonnegative on $[0,T]$. Hence, at an optimum solution $\rho_e$ should be as large as possible. This observation implies that if we know optimal values for the dual variables $\pi$, we can compute the optimal values for $\rho_{e}$ by
\begin{align}
\label{eq:rho}
\rho_{e}(\theta)&=\min\left\{0,c_{e}(\theta)-\pi_v(\theta)+\pi_w(\theta+\tau_e)\right\} &&\forall e=(v,w)\in E,~\theta\in [0,T]~.
\end{align}

\begin{theorem}[Weak Duality, \cite{AndersonPhilpott94}]
\label{thm:WeakDuality} $V[\eqref{pro:MCFP}]\leq V[\eqref{pro:MCFP*}]$.
\end{theorem}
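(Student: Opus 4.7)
The plan is a standard LP-style weak duality argument: for any primal-feasible flow $x$ (with induced storage $Y$) and any dual-feasible pair $(\pi,\rho)$, I chain a sequence of inequalities from the primal objective down to the dual objective. Storage costs and initial storages are zero by the stated reduction, so the primal objective is simply $\int_0^T\sum_e c_e x_e\,d\theta$.

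First, I multiply the dual constraint $\pi_v(\theta)-\pi_w(\theta+\tau_e)+\rho_e(\theta)\leq c_e(\theta)$ by $x_e(\theta)\geq 0$, integrate over $[0,T]$, and sum over arcs $e=(v,w)$. The inequality $\rho_e\leq 0$ combined with $x_e\leq u_e$ lets me replace $\rho_e x_e$ by the smaller $\rho_e u_e$ inside the integral. I then regroup the $\pi$-terms by node: the tail contribution $\int_0^T \pi_v(\theta) x_e(\theta)\,d\theta$ is already node-indexed, while for the head contribution $\int_0^T \pi_w(\theta+\tau_e) x_e(\theta)\,d\theta$ I perform the change of variables $\phi=\theta+\tau_e$, extending $\pi_v$ by zero past $T$ (consistent with $\pi_v(T)=0$ and right-continuity). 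The outcome is
\[
\int_0^T \sum_e c_e x_e\,d\theta
\;\geq\; \sum_v \int_0^T \pi_v(\theta)\Bigl[\sum_{e\in\delta^+(v)} x_e(\theta) - \sum_{e\in\delta^-(v)} \tilde{x}_e(\theta)\Bigr]d\theta
+ \int_0^T \sum_e \rho_e u_e\,d\theta,
\]
where $\tilde{x}_e(\theta):=x_e(\theta-\tau_e)$ (and zero for $\theta<\tau_e$) is the density of the absolutely continuous function $X_e-\tau_e$.

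Second, differentiating the flow-conservation constraint yields $\sum_{e\in\delta^+(v)} x_e - \sum_{e\in\delta^-(v)} \tilde{x}_e = b_v - Y_v'$, and substituting reduces the inner bracket to $b_v(\theta)-Y_v'(\theta)$. The cross-term $-\int_0^T \pi_v Y_v'\,d\theta$ is then handled by Lebesgue--Stieltjes integration by parts: since $Y_v$ is absolutely continuous with $Y_v(0)=0$ and $\pi_v$ is of bounded variation with $\pi_v(T)=0$, the endpoint contributions vanish and $-\int_0^T \pi_v\,dY_v = \int_0^T Y_v\,d\pi_v^+ - \int_0^T Y_v\,d\pi_v^-$. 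Because $d\pi_v^+$ and $d\pi_v^-$ are positive measures and $0\leq Y_v\leq U_v$, I get $\int_0^T Y_v\,d\pi_v^+\geq 0$ and $\int_0^T Y_v\,d\pi_v^-\leq \int_0^T U_v\,d\pi_v^-$. Summing over $v$ and combining with the previous display yields exactly the dual objective as a lower bound on the primal, which proves the claim.

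The main technical obstacle is the Lebesgue--Stieltjes bookkeeping at the two junctures above: the change of variables $\phi=\theta+\tau_e$ together with the zero-extension of $\pi_v$ across the horizon $T$, and the integration-by-parts identity for a bounded-variation $\pi_v$ paired with an absolutely continuous $Y_v$ (no joint jumps, so the standard formula applies without corrections beyond the endpoint terms, both of which vanish). Everything else is purely algebraic manipulation that mirrors the weak-duality proof for the finite-dimensional minimum cost flow LP.
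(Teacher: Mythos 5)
Your argument is correct: it is the standard weak-duality chain (multiply the dual arc constraint by $x_e\ge 0$, use $\rho_e\le 0$ with $x_e\le u_e$, shift the head terms by $\tau_e$ with $\pi$ extended by zero beyond $T$, invoke flow conservation, and integrate $\pi_v$ against $dY_v$ by parts, bounding $\int Y_v\,d\pi_v^{+}\ge 0$ and $\int Y_v\,d\pi_v^{-}\le\int U_v\,d\pi_v^{-}$). The paper itself gives no proof --- it cites Anderson and Philpott --- and your derivation is exactly the argument that citation refers to, with the relevant measure-theoretic points (absolute continuity of $Y_v$, vanishing endpoint terms) handled correctly.
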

Here and throughout the rest of this paper, we use  $V[OP]$ to denote the optimal value of an optimization problem $OP$. Moreover, we use $V[OP,x]$ to denote the objective function value for a given solution $x$. 

To state the next theorem, we introduce some basic definitions. We say that a monotonic increasing function $f:[a,b]\rightarrow \rr$ is \emph{strictly increasing} at $\theta\in(a,b)$ if~${f(\theta_1)<f(\theta)<f(\theta_2)}$ for any $\theta_1,\theta_2\in [a,b]$ with $\theta\in(\theta_1,\theta_2)$, $f$ is \emph{strictly increasing} at $a$ if $f(a)<f(\theta)$ for every $\theta\in(a,b]$, and $f$ is \emph{strictly increasing} at $b$ if $f(\theta)<f(b)$ for every $\theta\in[a,b)$. A function $f$ of bounded variation on $[a,b]$ is said to be \emph{strictly increasing} at $\theta\in [a,b]$ if $f^{+}$ is strictly increasing at $\theta$, similarly $f$ is \emph{strictly decreasing} at $\theta$ if $f^{-}$ is strictly increasing at $\theta$.

\begin{theorem}[Complementary Slackness, \cite{AndersonPhilpott94}]
\label{thm:CS} Suppose that $x$  with corresponding
storage $Y$ derived from~\eqref{eq:FCC} is feasible for \eqref{pro:MCFP} and $\pi$ with
corresponding~$\rho$ given by~\eqref{eq:rho} is feasible for \eqref{pro:MCFP*}. If for each $e=(v,w)\in E$, $v\in V$, and~${\theta\in [0,T]}$, the following conditions are met:
\begin{enumerate}[label = (CS\arabic*), leftmargin = *]
  \item\label{it:CS1} if $c_{e}(\theta)-\pi_v(\theta)+\pi_w(\theta+\tau_e)>0$, then
  $x_e(\theta)=0$;
  \item\label{it:CS2} if $c_{e}(\theta)-\pi_v(\theta)+\pi_w(\theta+\tau_e)<0$, then $x_e(\theta)=u_{e}(\theta)$;
  \item\label{it:CS3} if $\pi_v$ is strictly increasing at $\theta$, then
  $Y_v(\theta)=0$;
  \item\label{it:CS4} if $\pi_v$ is strictly decreasing at $\theta$, then
  $Y_v(\theta)=U_v(\theta)$.
\end{enumerate}
then $x$ and $\pi$ are optimal for \eqref{pro:MCFP} and \eqref{pro:MCFP*}, respectively.
\end{theorem}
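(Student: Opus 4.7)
The plan is to prove optimality by showing that the primal and dual objectives coincide at $(x,\pi)$, and then invoke Theorem~\ref{thm:WeakDuality}. To this end, I would compute $V[\eqref{pro:MCFP},x]-V[\eqref{pro:MCFP*},\pi]$ and rewrite it so that each summand is nonnegative term-by-term, forcing it to be zero under the four slackness conditions.

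The first step is to rewrite the dual term $\int_0^T\sum_v b_v(\theta)\pi_v(\theta)\,d\theta$ using the flow conservation law~\eqref{eq:FCC}. Since $B_v$ is absolutely continuous with density $b_v$, I would differentiate \eqref{eq:FCC} (equivalently, integrate $\pi_v\,dB_v$) and regroup the resulting expression arc-by-arc; after changing variable $\theta\mapsto\theta+\tau_e$ in the incoming-arc terms (which is harmless because $x_e$ vanishes outside $[0,T-\tau_e]$), each arc $e=(v,w)$ contributes $\int_0^T\bigl(\pi_v(\theta)-\pi_w(\theta+\tau_e)\bigr)x_e(\theta)\,d\theta$, while each node $v$ contributes a leftover term $\int_0^T\pi_v\,dY_v$. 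Combining everything gives
\begin{align*}
\int_0^T\sum_v b_v\pi_v\,d\theta
 = \sum_{e=(v,w)\in E}\int_0^T\bigl(\pi_v(\theta)-\pi_w(\theta+\tau_e)\bigr)x_e(\theta)\,d\theta
 + \sum_{v\in V}\int_0^T\pi_v\,dY_v.
\end{align*}

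The second step handles the node term $\int_0^T\pi_v\,dY_v$. Because $\pi_v$ is of bounded variation and right-continuous with $\pi_v(T)=0$, and $Y_v$ is absolutely continuous with $Y_v(0)=0$ (no initial storage), Stieltjes integration by parts yields $\int_0^T\pi_v\,dY_v=-\int_0^T Y_v\,d\pi_v=-\int_0^T Y_v\,d\pi_v^{+}+\int_0^T Y_v\,d\pi_v^{-}$. The measures $d\pi_v^{+}$ and $d\pi_v^{-}$ are supported precisely where $\pi_v$ is strictly increasing, respectively strictly decreasing, so condition \ref{it:CS3} kills the first integral and condition \ref{it:CS4} turns the second one into $\int_0^T U_v\,d\pi_v^{-}$. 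Substituting back, the ``$U_v\,d\pi_v^{-}$'' contributions cancel against the corresponding dual term, leaving
\begin{align*}
V[\eqref{pro:MCFP},x]-V[\eqref{pro:MCFP*},\pi]
 = \sum_{e=(v,w)\in E}\int_0^T r_e(\theta)\,x_e(\theta)\,d\theta - \sum_{e\in E}\int_0^T u_e(\theta)\,\rho_e(\theta)\,d\theta,
\end{align*}
where $r_e(\theta):=c_e(\theta)-\pi_v(\theta)+\pi_w(\theta+\tau_e)$ and $\rho_e=\min\{0,r_e\}$ by \eqref{eq:rho}.

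The last step is a pointwise case analysis on the sign of $r_e(\theta)$. If $r_e(\theta)>0$ then $\rho_e(\theta)=0$ and \ref{it:CS1} gives $x_e(\theta)=0$; if $r_e(\theta)<0$ then $\rho_e(\theta)=r_e(\theta)$ and \ref{it:CS2} gives $x_e(\theta)=u_e(\theta)$, so that $r_e x_e - u_e\rho_e = u_e r_e - u_e r_e=0$; and if $r_e(\theta)=0$ both integrands vanish. Thus the integrand is zero almost everywhere, giving $V[\eqref{pro:MCFP},x]=V[\eqref{pro:MCFP*},\pi]$, and weak duality forces both $x$ and $\pi$ to be optimal. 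I expect the main technical hurdle to be the rigorous justification of the integration by parts together with the change of variable $\theta\mapsto\theta+\tau_e$ (taking proper care of the boundary $[T-\tau_e,T]$ and of the right-continuity of $\pi$ at jumps), since everything after that is a clean pointwise sign argument.
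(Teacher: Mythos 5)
This theorem is imported from Anderson and Philpott~\cite{AndersonPhilpott94} and the paper itself supplies no proof, so there is no internal argument to diverge from; your proposal is the standard proof of this result and is sound. The plan — substitute the conservation law into $\int_0^T\sum_v b_v\pi_v\,d\theta$, shift the incoming-arc terms by $\tau_e$, integrate $\int_0^T\pi_v\,dY_v$ by parts so that \ref{it:CS3}--\ref{it:CS4} cancel the $\int U_v\,d\pi_v^-$ term, and then annihilate the remaining arc integrands pointwise via \ref{it:CS1}--\ref{it:CS2} together with \eqref{eq:rho} before invoking Theorem~\ref{thm:WeakDuality} — is exactly the argument that establishes the cited result.
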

We shall refer to conditions \ref{it:CS1}--\ref{it:CS4} as \emph{complementary slackness} conditions.
The function $\pi$ of bounded variation is said to be {\em complementary slack} with $x$ if $x$ and $\pi$ satisfies these conditions.

It is an open question as to whether a strong duality result can be established whereby $V[\eqref{pro:MCFP}]= V[\eqref{pro:MCFP*}]$ and these values are attained in each program. As noted previously, a primal optimal solution exists for \eqref{pro:MCFP}. Thus we are left with the task to find a dual feasible solution~$\pi$ for which $V[\eqref{pro:MCFP},x]=V[\eqref{pro:MCFP*},\pi]$. In general, strong duality may not hold, even for the special case that all transit times are zero (see \cite{Pullan97MMS} for some examples). However, we show that strong duality can be derived for \eqref{pro:MCFP} and \eqref{pro:MCFP*} under the following assumptions.

\begin{asum}
\label{asum:rational}
The transit times $\tau_e, e\in E$ are all rational as well as the time horizon~$T$.
\end{asum}

\begin{asum}
\label{asum:analytic}
The input functions $c_e,u_e$ for all $e\in E$ and $b_v, u_v$ for each $v\in V$ are all piecewise
analytic on $[0,T]$.
\end{asum}

We recall that a function $h:[0,T]\rightarrow \rr$ is said to \emph{piecewise analytic} if there exists a partition $\{\theta_0,\theta_1,\ldots,\theta_m\}$ of $[0,T]$, $\epsilon>0$, and $h_k$ analytic on $(\theta_{k-1}-\epsilon,\theta_k)$ with~$h_k(\theta)=h(\theta)$ for all $\theta\in [\theta_{k-1},\theta_k)$ and all $k=1,\ldots,m$. It follows from this definition that piecewise analytic functions are right-continuous but not necessarily left-continuous. In fact, they may be discontinuous at a finite number of points.

We suppose that Assumptions~\ref{asum:rational} and \ref{asum:analytic} hold throughout the rest of the paper. These assumptions guarantee the existence of a piecewise analytic optimal solution for \eqref{pro:MCFP}.

\begin{theorem} [Pullan~\cite{Pullan97}]
\label{thm:analytic-flow}
Under Assumptions~\ref{asum:rational} and \ref{asum:analytic}, if $\mathcal{F}$ is nonempty, then \eqref{pro:MCFP} has an optimal solution which is also piecewise analytic on $[0,T]$.
\end{theorem}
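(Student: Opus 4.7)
The plan is to reduce MCFP (which has time-delays) to an ordinary Separated Continuous Linear Program (SCLP) and then invoke Pullan's earlier piecewise-analytic existence result for SCLP. The reduction exploits Assumption~\ref{asum:rational}: pick a common denominator $d\in\nn$ such that $T=N/d$ for some $N\in\nn$ and $\tau_e=n_e/d$ for some $n_e\in\nn_{\geq 0}$ for every $e\in E$. Let $h:=1/d$ and partition the horizon into $N$ equal subintervals $I_k:=[(k-1)h,kh)$.

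Next I would ``unroll'' the time axis. For each arc $e\in E$ and each $k=1,\ldots,N$, introduce a copy $x_{e,k}:[0,h)\to\rr_{\geq 0}$ defined by $x_{e,k}(\theta):=x_e(\theta+(k-1)h)$, and analogously copies $Y_{v,k}$, $b_{v,k}$, $c_{e,k}$, $c_{v,k}$, $u_{e,k}$, $U_{v,k}$ of the storage and the data. Because $\tau_e$ is an integer multiple of $h$, the delay term $(X_e-\tau_e)(\theta)$ on $I_{k+n_e}$ is exactly $X_{e,k}(\theta-(k+n_e-1)h)$. Hence the flow conservation constraint \eqref{eq:FCC} decouples into a finite system of constraints between the copies indexed over $(e,k)$ and $(v,k)$, involving only the common ``within-interval'' variable $\theta\in[0,h)$ and \emph{no} time delay. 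Viewing $(e,k)$ and $(v,k)$ as the arcs and nodes of an enlarged acyclic graph on the short horizon $[0,h)$, the resulting program is an SCLP in the standard form.

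Once the reformulation is in place, I would verify that piecewise analyticity is preserved. Since each input function is piecewise analytic on $[0,T]$ by Assumption~\ref{asum:analytic}, its restriction to the subinterval $I_k$ followed by the shift back to $[0,h)$ is again piecewise analytic. Thus the unrolled SCLP has piecewise analytic data. Applying Pullan's existence theorem for SCLP with piecewise analytic data (\cite{Pullan96}), the unrolled SCLP admits an optimal solution $\{x_{e,k}^*, Y_{v,k}^*\}$ that is piecewise analytic on $[0,h)$. Gluing these pieces via $x_e^*(\theta):=x_{e,k}^*(\theta-(k-1)h)$ for $\theta\in I_k$ produces a feasible MCFP solution whose restriction to each $I_k$ is piecewise analytic, hence is piecewise analytic on the whole interval $[0,T]$ (the endpoints of the $I_k$ simply add finitely many extra breakpoints). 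Optimality follows because the two formulations have the same objective value on corresponding feasible points, and nonemptiness of $\mathcal{F}$ transfers to the SCLP, so the infimum is attained.

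The main obstacle is not the SCLP existence result itself, which is imported as a black box, but the careful bookkeeping needed to show that the time-delayed flow conservation constraint really is equivalent, under the interval decomposition, to a delay-free coupling between the arc copies $x_{e,k}$ and the storage copies $Y_{v,k}$. One must check that the ``boundary'' contributions between consecutive intervals $I_k$ and $I_{k+1}$ (where $X_e$, $Y_v$, $B_v$ jump from one copy to the next) are correctly matched by algebraic relations among the copies and their terminal values, and that the objective $\int_0^T(\cdots)\,d\theta$ decomposes additively over the $I_k$ after the variable change. Once that translation is verified, the piecewise analytic structure transfers cleanly in both directions.
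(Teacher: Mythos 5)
The paper offers no proof of Theorem~\ref{thm:analytic-flow}: it is imported verbatim from Pullan~\cite{Pullan97}, and the reduction you describe --- using Assumption~\ref{asum:rational} to pick $h=1/d$ with $T=Nh$ and $\tau_e=n_eh$, unrolling $[0,T]$ into $N$ copies of $[0,h)$ so that every delay becomes an integral shift of the copy index, and then invoking the delay-free SCLP theory --- is exactly the transformation Pullan performs there (and which this paper's introduction summarizes). So in substance you are reconstructing the cited proof rather than proposing a different route. One caveat: the claim that the unrolled program ``is an SCLP in the standard form'' overstates the case. The conservation constraint on interval $I_j$ involves, besides the running integrals $\int_0^\theta x_{e,j}(s)\,ds$, the \emph{terminal} integrals $\int_0^h x_{e,k}(s)\,ds$ of the earlier copies $k<j$ (the flow already committed in preceding intervals), and such endpoint evaluations are not of the form $\int_0^\theta Gx(s)\,ds$ admitted by standard SCLP. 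Pullan himself claims only that the transformed problem is ``very close to'' a special class of SCLP and then verifies that the relevant results (existence of extreme-point, piecewise analytic optima under piecewise analytic data, cf.~\cite{Pullan96}) extend to this enlarged class. You correctly identify this bookkeeping as the main obstacle but do not carry it out, so the argument as written has its gap precisely where the black box is invoked: to close it you must either check that the near-SCLP class inherits the existence theorem, or show how the terminal integrals and the inter-interval storage matching can be recast as admissible SCLP constraints.
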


\section{Least Cost Augmenting Paths}
\label{sex:SAP}
A key observation in establishing strong duality for the \emph{static} minimum cost flow problem is the fact that starting from some feasible flow we can construct a dual solution if the network contains no augmenting cycles with negative cost. More precisely, the shortest distance labels from one specified node to the other nodes in the residual network define a dual feasible solution which is complementary slack with the given feasible flow. We wish to derive duality results for \eqref{pro:MCFP} along the same lines. To do so, we need to carry over the concept of least cost augmenting paths to the continuous-time setting. 

For each arc $e=(v,w)\in E$, we create a \emph{backward arc} $\overleftarrow{e} :=(w,v)$. This causes no notational conflict since $(w,v)$ is not in $E$ if $(v,w)$ belongs to $E$ due to our assumption that  there is at most one arc between any pair of nodes. 
With each backward arc $\overleftarrow{e}$ we associate a transit time~$\tau_{\overleftarrow{e}}:=-\tau_e$ and a cost function $c_{\overleftarrow{e}}:=-(c_{e}-\tau_e)$. We denote the set of all backward arcs by $\overleftarrow{E}$ and set
$E^\text r := E \cup \overleftarrow{E}$.


Following Philpott \cite{Philpott85}, a \emph{node-time pair} (NTP) is a member of $V\times [0,T]$ which refers to a particular node at a specific time. We say that NTP~$(v,\alpha)$ is \emph{arc-linked} to NTP $(w,\beta)$ if we can arrive at $w$ at time $\beta$ when departing from $v$ at time $\alpha$ via~${e:=(v,w)}$, i.e., $e\in E^\text r$ and~$\beta=\alpha+\tau_{e}$. We also say that NTP $(v,\alpha)$ is \emph{node-linked} to NTP $(w,\beta)$ if $v=w$. 
A \emph{continuous-time dynamic walk} from NTP $(v,\alpha)$ to NTP $(w,\beta)$ is a sequence of NTPs
\begin{align*}
P:(v,\alpha)=(v_1,\theta_{1}),(v_2,\theta_{2}),\ldots,(v_q,\theta_{q})=(w,\beta),
\end{align*}
such that consecutive members are either arc- or node-linked. 
The sequence $P$ is called a \emph{continuous-time dynamic path} if all NTPs are distinct and is called a \emph{continuous-time dynamic cycle} if $q \geq 3$, $(v,\alpha)=(w,\beta)$, and all other NTPs are distinct. For reasons of brevity, hereafter, the term ``continuous-time dynamic'' is omitted when referring to a continuous-time dynamic walk, path, or cycle.

Given a flow over time $x$ with corresponding storage $Y$, we define the corresponding \emph{residual network} as follows. Due to Theorem \ref{thm:analytic-flow}, the flow $x$ is assumed to be piecewise analytic.  For each arc $e \in E$ we define the \emph{residual capacity} of $e$ and $\overleftarrow{e}$ by $u^\text r_e := u^\text r_e - x_e$ and $u^\text r_{\overleftarrow{e}} := x_e - \tau_e$, respectively. Further, for each node $v \in V$, we define the \emph{upper} and \emph{lower residual capacity} of $v$ as $U^\text r_v := U_v - Y_v$ and $L^\text r_v:= Y_v$, respectively.  For any arc $e\in E$, the residual capacities $u^\text r_e$ and $u^\text r_{\overleftarrow{e}}$ give the maximum additional flow rate that can be sent or removed from arc $e$, respectively, without violating the arc capacity constraint $0\leq x_e\leq u_e$.  Similarly, for any node $v\in V$, the residual capacities $U^\text r_v$ and $L^\text r_v$ give the maximum additional flow that can be stored or removed from node $v$, respectively, without violating the node capacity constraint $0\leq Y_v\leq U_v$.

The concept of residual network is based on the following intuitive idea. Suppose that the flow rate into arc $e=(v,w)$ at time $\theta$ is $x_e(\theta)$. Then we can send an additional flow at rate $u_e(\theta)-x_e(\theta)$ into arc $e$ at time $\theta$. Also notice that we can send a flow at rate
$x_e(\theta)$ from node $w$ to node $v$ over the backward arc $\overleftarrow{e}$ at time $\theta+\tau_e$, which amounts to
canceling the existing flow on the arc $e$ at time $\theta$. Whereas sending one unit flow along arc $e$ increases the flow cost by $c_{e}(\theta)$ units, sending one unit flow from node $w$ to
node $v$ on the arc $\overleftarrow{e}$ at time $\theta+\tau_e$ decreases the flow cost by $c_{e}(\theta)$ units. Hence, the residual capacity and cost of the arc $\overleftarrow{e}$ were defined as $x_e-\tau_e$ and $-(c_e-\tau_e)$, respectively. The concept of upper and lower residual capacities at nodes is based on a similar idea.

Next, we introduce the concept of \emph{strict} augmenting paths and cycles for the continuous-time setting. Given a path (or cycle) $P:(v_1,\theta_{1}),\ldots,(v_q,\theta_{q})$, the \emph{residual capacity}
of $P$ is defined as
\begin{align*}
  \ca(P):=\min\{u_1,\ldots,u_{q-1}\},
\end{align*}
where for $k=1,\ldots,q-1$
\begin{align*}
u_{k}:=\begin{cases}
        u^\text r_e(\theta_k) & \text{if } e:=(v_{k},v_{k+1})\in E^\text r\\
        \min\{U^\text r_{v_{k}}(\theta) \mid \theta_{k}\leq \theta\leq \theta_{k+1}\}
          & \text{if } v_{k}=v_{k+1},~\theta_{k}<\theta_{k+1}\\
        \min\{L^\text r_{v_{k}}(\theta)\mid \theta_{k+1}\leq \theta\leq \theta_{k}\}
          & \text{if } v_{k}=v_{k+1},~\theta_{k+1}<\theta_{k}
       \end{cases}~.
\end{align*}
Note that $\ca(P)$ is a real number and not a function as time is already encoded within $P$. Clearly, if $\ca(P)>0$, one can push additional flow along~$P$ without violating capacity constraints. 
We refer to such a path (or cycle) as a \emph{strict augmenting path} (or \emph{cycle}).

%

The \emph{cost} of $P$ is defined as the sum of the costs of the arcs at the times when they appear along $P$, i.e.,
\begin{align}
\label{eq:cost}
c(P):=\sum_{k:v_k\neq v_{k+1}} c_{v_k,v_{k+1}}(\theta_{k}).
\end{align}
Here, the index $k$ varies from $1$ to $q-1$. Recall that the storage costs are supposed to be zero and therefore the cost of a path or cycle depends only on arc transit costs.  We refer to path $P$ as a \emph{least cost strict augmenting path} if $\ca(P)>0$ and in addition, it has the minimum cost among all strict augmenting paths from NTP $(v,\alpha)$ to NTP $(w,\beta)$.

Before with proceeding our discussion, we give an example to show that it is not enough to define an augmenting path to be a path with positive residual capacity.

\begin{example}\label{ex:noLCsAPath}
 Consider a network with two nodes $s$ and $t$, and one arc $e:=(s,t)$.  We let $T:=1$ and define $c_{e}(\theta):=u_{e}(\theta):=2\theta$ for each $\theta \in [0,1]$. Further, we assume that the transit time of $e$ is zero. Storage of one flow unit is allowed at the nodes $s$ and $t$, i.e., $U_s(\theta) := U_t(\theta) :=1$ for all $\theta\in [0,1]$. The storage costs are supposed to be zero. Finally, there is a supply rate of $\theta$ at node $s$ and a demand rate of $\theta$ at node $t$, i.e., $b_s(\theta) := -b_t(\theta) := \theta$ for each $\theta\in [0,1]$. Hence, there is a unique feasible flow $x$ given by $x_{e}=\theta$ which is, of course, optimal.

We now examine the existence of least cost strict augmenting paths from NTP~$(s,0)$ to NTPs $(t,\theta)$, $\theta\in [0,\theta]$. For each~${\theta\in [0,1]}$, the cost of path $P_\theta:(s,0),(t,0),(t,\theta)$ is zero, but $P_\theta$ is not a strict augmenting path as $\ca(P_\theta)=0$. On the other hand, the path $P^\epsilon_\theta:(s,0),(s,\epsilon),(t,\epsilon),(t,\theta)$ with $0<\epsilon < \theta$, which uses arc $e$ at point in time $\epsilon$, is a strict augmenting path whose cost is $2\epsilon$. Hence,  the cost of path $P^\epsilon_\theta$ becomes smaller as $\epsilon$ approaches to zero. Consequently, there exist no least cost strict augmenting path from $(s,0)$ to $(t,\theta)$ for all~${\theta\in [0,1]}$. 
However, for each $\theta\in [0,1]$, one may consider the path $P_\theta$ as an augmenting path because an additional flow can be sent in the neighborhood of path $P_\theta$. Having defined augmenting paths in this manner,  the path $P_\theta$ is a least cost augmenting path from NTP $(s,0)$ to NTP $(t,\theta)$ for each $\theta\in[0,1]$.
\end{example}

It follows from Example~\ref{ex:noLCsAPath} that we need to define  augmenting paths in an appreciate way. Let $P$ be a given path. We can identify $P$ by a sequence of nodes, say $v_1,v_2,\ldots,v_q$ such that $e_i:=(v_i,v_{i+1})\in E^r, i=1,\ldots,q-1$ together with the arrival and departure times $\alpha_i$ and $\beta_i$, respectively, at node $i$ for $i=1,\ldots,q$, where $\beta_{i+1}=\alpha_{i}+\tau_{e_i}$. Given an $\epsilon>0$, we define the \emph{$\epsilon$-neighborhood} $N(P,\epsilon)$ of $P$ as the set of all paths as $P'$ with the node sequence $v_1,v_2,\ldots,v_q$ together with the arrival and departure times $\alpha'_i$ and $\beta'_i$, respectively, for $i=1,\ldots,q$ such that $|\alpha_i-\alpha'_i|<\epsilon$ and $|\beta_i-\beta'_i|<\epsilon$ for all $i=1,\ldots,q$.
In other words, a path $P'$ is contained in $N(P,\epsilon)$ if and only if $P$ and $P'$ are representable by the same sequence of nodes where all arrival and departure times differ by at most~$\epsilon$. 

We say that a path is \emph{augmenting} if it has a strict augmenting path in its neighborhood. Intuitively, the path $P$ is an augmenting path if we can send an additional flow rate along a path in the neighborhood of $P$. However, we might have some augmenting path with zero residual capacity (see Example~\ref{ex:noLCsAPath}). An augmenting path $P$ from $(v,\alpha)$ to $(w,\beta)$ is said to be a \emph{least cost augmenting path} if it has the minimum cost among all augmenting paths from $(v,\alpha)$ to $(w,\beta)$. 
An augmenting cycle is called a \emph{negative augmenting cycle} if its cost is negative.

In the following, we investigate the existence of least cost augmenting paths from a NTP to all other NTPs. 
We consider two arbitrary nodes $s$ and $t$ in $V$ and fix NTP $(s,0)$ as the \emph{source} and NTP $(t,T)$ as the \emph{sink}. The question is whether or not there exists a least cost augmenting path from $(s,0)$ to $(t,T)$. A closely related problem is already studied by Koch and Nasrabadi~\cite{KochNasr10}, who discuss the \emph{continuous-time dynamic shortest path problem} with negative transit times. They prove that a dynamic shortest path exists in general if and only if the cost functions are piecewise analytic and transit times are rational. We use the same techniques as in \cite{KochNasr10} to show that under Assumptions \ref{asum:rational} and \ref{asum:analytic} a least cost augmenting path from NTP~$(s,0)$ to NTP $(t,T)$ exists.

We need to give some definitions. Suppose that $P:(v_1,\theta_{1}),(i_2,\theta_{2}),\ldots,(v_q,\theta_{q})$ is an augmenting path from NTP $(s,0)$ to NTP $(t,T)$. The path $P$ is said to be a \emph{local least cost augmenting path} if there exists an
$\epsilon>0$ such that $c(P)\leq c({P'})$ for all augmenting paths $P'$ in the $\epsilon$-neighborhood of $P$.

We next give a characterization of local least cost augmenting paths. Let $Q:(v_\ell,\theta_\ell),\ldots,(v_r,\theta_r)$ be a subsequence of consecutive NTPs in $P$. We refer to $Q$ as an \emph{arc-subpath} of $P$ if any pair of consecutive NTPs in $Q$ are arc-linked, i.e., $(v_{k},v_{k+1})\in E^r$ for $k=\ell,\ldots,r-1$. In this case, $Q$ can be seen as the sequence  $(v_{\ell},v_{\ell+1}),\ldots,$ $(v_{r-1},v_r)$ of arcs in $E^r$ together with starting time~$\theta_\ell$ from node $v_\ell$. If in addition, $v_{\ell-1}= v_{\ell}$ or $v_{\ell}=v_1$ and $v_{r}= v_{r+1}$ or $v_{r}=v_q$, then~$Q$ is called a \emph{maximal arc-subpath} of $P$. 

In what follows, we suppose that $Q$ is an arc-subpath of $P$. For a point in time~$\alpha\in [0,T]$, we define a path $P|_Q(\alpha)$ as
\begin{align}
\label{eq:P(alpha)}
  &P|_Q(\alpha):
    (v_1,\theta_{1}),\ldots,(v_{\ell^*},\theta_{\ell^*}),
      (v_\ell,\alpha_{\ell}),\ldots,(v_r,\alpha_r),
        (v_{r^*},\theta_{r^*}),\ldots,(v_q,\theta_{q})~, \\
  &\label{eq:P(alpha)_1}\begin{aligned}
  &\text{with  }& \ell^*&:=\begin{cases}
                        \ell - 1 & \text{if } v_{\ell-1} = v_{\ell} \\
                        \ell & \text {if }  v_{\ell-1} \neq v_{\ell}
                      \end{cases}
  &&\text{  and  }& r^*&:=\begin{cases}
                        r + 1 & \text{if } v_{r} = v_{r+1} \\
                        r & \text {if }  v_{r} \neq v_{r+1}
                      \end{cases}
  \end{aligned}
\end{align}
where $\alpha_\ell:=\alpha$ and $\alpha_{k+1}:=\alpha_{k}+\tau_{v_{k},v_{k+1}}$ for $k=\ell,\ldots,r-1$. Roughly speaking,~$P|_Q(\alpha)$ is constructed from $P$ by shifting the starting time of arc-subpath $Q$ from $\theta_\ell$ to $\alpha$. 
It is obvious that $P|_Q(\alpha)$ is contained in the $\epsilon$-neighborhood of $P$ for some $\epsilon>0$ if 
$|\theta_\ell - \alpha| < \epsilon$. 
We next determine a maximal interval $[a,b]$, containing $\theta_\ell$ so that the path $P|_Q(\alpha)$ is an augmenting path for every $\alpha\in [a,b]$.  Here and subsequently, by \emph{maximal} we mean with respect to inclusion. To do so, we define a function $f:[0,T]\rightarrow \rr_{\geq0}$ as
\begin{align*}
f(\theta):=\min_{\ell\leq k\leq r-1}\Bigr\{u^r_{e_k}\Bigl(\theta+\sum_{j=\ell}^{k-1}{\tau_{e_j}}\Bigl)\Bigr\},
\end{align*}
where $e_k=(v_k,v_{k+1})$ for $k=\ell,\ldots,r-1$. For each $\theta\in [0,T]$, the value $f(\theta)$ represents the residual capacity of $Q$ when starting at time~$\theta$. Further, we define two more functions $g_{\ell},g_r:[0,T]\rightarrow \rr_{\geq0}$ as
\begin{align*}
  g_{\ell}(\theta)&:=\begin{cases}
           U^r_{v_\ell}(\theta) &\text{if } \theta\geq \theta_{\ell^*},\\
	   L^r_{v_\ell}(\theta) &\text{if } \theta\leq \theta_{\ell^*},
          \end{cases}
  &&\text{ and }&
  g_{r}(\theta)&:=\begin{cases}
           L^r_{v_r}(\theta+\tau_Q)  &\text{if } \theta+\tau_Q\geq \theta_{r^*},\\
	   U^r_{v_r}(\theta+\tau_Q)&\text{if } \theta+\tau_Q\leq \theta_{r^*},
          \end{cases}
\end{align*}
where $\tau_Q$ is the transit time of arc-subpath $Q$, i.e., $
\tau_Q:=\sum_{\ell\leq k\leq r-1}\tau_{e_k}$. The value $g_\ell(\theta)$ gives an upper bound on the amount of flow that can be stored at node $v_{\ell}$ at time $\theta$ if $\theta\geq \theta_{\ell^*}$, and gives an upper bound on the amount of flow that can be removed from the available storage at node $v_{\ell}$ at time $\theta$ if  $\theta\leq \theta_{\ell^*}$.  A similar interpretation holds for the value $g_r(\theta)$.



Since $P$ is an augmenting path, $P|_Q(\alpha)$ is an augmenting path if $g_\ell$ and $g_r$ are strictly positive on $(\min\{\theta_{\ell^*},\alpha\},\max\{\theta_{\ell^*},\alpha\})$ and  $(\min\{\theta_{r^*}-\tau_Q,\alpha\},\max\{\theta_{r^*}-\tau_Q,\alpha\})$, respectively, and $f$ is not identically zero on any neighborhood of $\alpha$. So let $[a_\ell,b_\ell]$ be the maximal interval containing~$\theta_{\ell^*}$ such that~$g_\ell$ is strictly positive on $(a_\ell,b_\ell)\setminus \{\theta_{\ell^*}\}$. Similarly, let $[a_r,b_r]$ be the maximal interval containing $\theta_{r^*}-\tau_Q$ such that $g_r$ is strictly positive on~$(a_r,b_r)\setminus \{\theta_{r^*}-\tau_Q\}$. Finally, let  $[a_f,b_f]$ be the maximal interval containing~$\theta_{\ell}$ such that~$f$ is strictly positive on $(a_\ell,b_\ell)\setminus \{\theta_{\ell}\}$. Then the interval~${[a,b] := [a_\ell,b_\ell] \cap [a_r,b_r] \cap [a_f,b_f]}$ contains all points in time $\alpha$ such that $P|_Q(\alpha)$ remains an augmenting path. Further, $[a,b]$ contains $\theta_\ell$ as $P=P|_Q(\theta_\ell)$ is an augmenting path. Therefore, $[a,b]$ is the desired maximal interval.

So far, we have determined the maximal interval $[a,b]$ for which the path $P|_Q(\alpha)$ is an augmenting path for all $\alpha\in [a,b]$. We now define a cost function $c_Q:[a,b]\rightarrow \rr$ with respect to $Q$ as
\begin{align}
\label{eq:cost-arc-path}
c_Q(\alpha)&:=\sum_{\ell\leq k\leq r-1}c_{e_k}(\alpha_k)~,
\end{align}
where $\alpha_\ell=\alpha$ and $\alpha_{k+1}(\alpha)=\alpha_{k}+\tau_{e_{k}}$ for $k=\ell,\ldots,r-1$ as before. It is straightforward that the cost function $c_Q$ has a local minimum on $[a,b]$ at the point~$\theta_\ell$ if~$P$ is a local least cost augmenting path. Conversely, if for each arc-subpath $Q$ of $P$ the function $c_{Q}$ attains a local minimum at $\theta_\ell$ within the interval $[a,b]$, then $P$ is a local least cost augmenting path. Thus we have established the following lemma. 

\begin{lemma}
\label{lem:local-path}
The path $P$ is a local least cost augmenting path if and only if for each arc-subpath $Q$ of $P$ with starting time $\theta_\ell$ the cost function $c_{Q}$, given by \eqref{eq:cost-arc-path}, has a local minimum at the point $\theta_\ell$.
\end{lemma}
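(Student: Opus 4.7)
The plan is to prove both directions by relating paths in $N(P,\epsilon)$ to time-shifts of arc-subpaths as constructed in \eqref{eq:P(alpha)}. The key observation is that because storage costs are zero, the cost of any path is determined entirely by its arc-subpath contributions, and under the same node sequence as $P$ the times within each maximal arc-subpath are rigidly determined by the single starting time of that arc-subpath. Hence the cost functional is separable over maximal arc-subpaths, and a local minimum of $c(\cdot)$ over $N(P,\epsilon)$ is equivalent to simultaneous local minima of the individual $c_{Q_i}$.

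For the forward direction, I assume $P$ is a local least cost augmenting path with some witness $\epsilon>0$. Fix an arbitrary arc-subpath $Q$ of $P$ starting at time $\theta_\ell$, and consider $P|_Q(\alpha)$ for $\alpha$ close to $\theta_\ell$. By the discussion preceding the lemma, $P|_Q(\alpha)$ is an augmenting path for all $\alpha$ in the maximal interval $[a,b]\ni\theta_\ell$, and it lies in $N(P,|\alpha-\theta_\ell|)$ because only the times inside $Q$ are perturbed, each by at most $|\alpha-\theta_\ell|$. Since arc costs outside $Q$ and the (zero) storage costs are unchanged, we have
\begin{equation*}
c(P|_Q(\alpha)) - c(P) = c_Q(\alpha) - c_Q(\theta_\ell).
\end{equation*}
Local optimality of $P$ forces the left-hand side to be nonnegative for $|\alpha-\theta_\ell|$ sufficiently small, so $c_Q$ attains a local minimum at $\theta_\ell$.

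For the backward direction, let $Q_1,\ldots,Q_m$ be the maximal arc-subpaths of $P$, with starting times $\theta_{\ell^{(1)}},\ldots,\theta_{\ell^{(m)}}$, separated by storage segments. Any augmenting path $P'\in N(P,\epsilon)$ (from $(s,0)$ to $(t,T)$) has the same node sequence and therefore the same decomposition into maximal arc-subpaths $Q'_1,\ldots,Q'_m$; within each $Q'_i$ the times are rigidly determined by a single starting time $\alpha_i$ with $|\alpha_i - \theta_{\ell^{(i)}}|<\epsilon$, while the intervening storage segments absorb the shifts. Using again that storage costs vanish, the cost of $P'$ decomposes as
\begin{equation*}
c(P') = \sum_{i=1}^{m} c_{Q_i}(\alpha_i).
\end{equation*}
By hypothesis each $c_{Q_i}$ has a local minimum at $\theta_{\ell^{(i)}}$, so there exists $\epsilon_i>0$ with $c_{Q_i}(\alpha_i)\geq c_{Q_i}(\theta_{\ell^{(i)}})$ whenever $|\alpha_i-\theta_{\ell^{(i)}}|<\epsilon_i$. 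Choosing $\epsilon:=\min_i\epsilon_i$ (and shrinking further if needed so that the corresponding storage segments remain feasible, which they do for $P'$ an augmenting path), summation yields $c(P')\geq c(P)$, so $P$ is a local least cost augmenting path.

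The main subtlety, and what I expect to be the most delicate step to write out carefully, is the backward direction's claim that every augmenting $P'\in N(P,\epsilon)$ is uniquely parameterized by the starting times $(\alpha_1,\ldots,\alpha_m)$ of its maximal arc-subpaths; this requires using that consecutive arc-linked NTPs inside a maximal arc-subpath are related by the fixed transit time shift, so no additional degrees of freedom exist beyond these $m$ starting times, and that the storage at intermediate common nodes simply fills the remaining gaps. Once this parametrization is established, separability of the cost is immediate because storage costs are zero, and the rest of the argument is a straightforward application of the local minimum hypothesis to each summand.
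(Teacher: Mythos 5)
Your argument is correct and follows exactly the route the paper intends: the paper itself gives no explicit proof, simply declaring the equivalence ``straightforward'' after constructing $P|_Q(\alpha)$ and the cost function $c_Q$, and your two directions (perturbing a single arc-subpath via $P|_Q(\alpha)$ for necessity, and the separable parametrization of every $P'\in N(P,\epsilon)$ by the starting times of its maximal arc-subpaths for sufficiency) are precisely the elaboration of that setup. The subtlety you flag about the rigid parametrization and the zero storage costs is indeed the content the paper leaves implicit, and your treatment of it is sound.
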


In what follows, let $\mathcal{P}_{\loc}$ be the set of all augmenting paths~$P$ from NTP $(s,0)$ to NTP $(t,T)$ such that for each \emph{maximal} arc-subpath~$Q$ of $P$ with starting time~$\theta$ the function $c_{Q}$, given by \eqref{eq:cost-arc-path}, has a local minimum at $\theta_\ell$ and is not constant on any open neighborhood containing~$\theta_\ell$. Further, we assume that two paths $P_1$ and $P_2$ are identified if they differ only in the starting time $\theta_1$ and~$\theta_2$ ($\theta_1<\theta_2$), respectively, of one common arc-subpaths $Q$ and $c_{Q}$ is constant over $[\theta_1,\theta_2]$. Note that in this case $P_1$ and~$P_2$ have the same cost, i.e., $c(P_1)=c(P_2)$. Then, for each local least cost augmenting path, one augmenting path with the same cost is contained in $\mathcal{P}_{\loc}$. Notice that $\mathcal{P}_{\loc}$ can contain also paths which are not local least cost augmenting. Nevertheless, the following lemma shows that the set of local least cost augmenting paths from NTP $(s,0)$ to NTP $(t,T)$ is finite.

\begin{lemma}
  \label{lem:analytic-nonempty-finite}
  The set $\mathcal{P}_{\loc}$ is finite.
\end{lemma}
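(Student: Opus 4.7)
The plan is to show that every path in $\mathcal{P}_{\loc}$ is assembled from NTPs drawn from a common finite set, so that distinctness of NTPs forces bounded length and hence finiteness of the set. Let $D$ be a common denominator of the transit times $\tau_e$, $e\in E$, and of $T$, which exists by Assumption~\ref{asum:rational}. Along any maximal arc-subpath $Q:(v_\ell,\theta_\ell),\ldots,(v_r,\theta_r)$, the partial transit sums $\sigma_k:=\sum_{j=\ell}^{k-1}\tau_{e_j}$ lie in $\frac{1}{D}\mathbb{Z}$ and, since $\theta_k=\theta_\ell+\sigma_k\in[0,T]$, can take at most $2DT+1$ distinct values in $[-T,T]$. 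Because the NTPs of $Q$ are pairwise distinct, $Q$ has at most $|V|(2DT+1)$ arcs, so only finitely many arc sequences can occur as a maximal arc-subpath of some $P\in\mathcal{P}_{\loc}$.

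Next I would fix such an arc sequence $Q$ and bound the admissible starting times $\theta_\ell$. By Assumption~\ref{asum:analytic} each $c_{e_k}$ is piecewise analytic with a finite breakpoint set $\Theta_{e_k}\subset[0,T]$, so the cost function $c_Q(\alpha)=\sum_{k=\ell}^{r-1}c_{e_k}(\alpha+\sigma_k)$ of~\eqref{eq:cost-arc-path} is piecewise analytic with breakpoints in the finite set $\bigcup_k(\Theta_{e_k}-\sigma_k)\cap[0,T]$. On each open analytic piece the derivative $c_Q'$ is analytic, so either $c_Q$ is identically constant on that piece or $c_Q'$ has only isolated zeros, yielding finitely many interior critical points. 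Combined with the finitely many breakpoints, this gives only finitely many local minima of $c_Q$ on $[0,T]$. By the definition of $\mathcal{P}_{\loc}$, each starting time $\theta_\ell$ is a local minimum of $c_Q$ at which $c_Q$ is not constant on any neighborhood, and therefore lies in this finite set.

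Combining the two steps, only finitely many pairs (arc sequence, starting time) can describe a maximal arc-subpath of some $P\in\mathcal{P}_{\loc}$. The set $\mathcal{N}$ of NTPs that can appear along such a maximal arc-subpath is therefore also finite, being obtained by shifting each admissible starting time by an element of $\frac{1}{D}\mathbb{Z}\cap[-T,T]$ and pairing it with a node in $V$. Every NTP of any $P\in\mathcal{P}_{\loc}$ lies in $\mathcal{N}$, so distinctness of NTPs gives $|P|\leq|\mathcal{N}|$. Only finitely many sequences of length at most $|\mathcal{N}|$ can be drawn from $\mathcal{N}$, and the identification described immediately before the lemma can only reduce the count, so $\mathcal{P}_{\loc}$ is finite. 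The main obstacle is the uniform bound on local minima of $c_Q$: both Assumption~\ref{asum:rational}, which keeps the combinatorial complexity of admissible $Q$ bounded, and Assumption~\ref{asum:analytic}, which makes the local minima of each fixed $c_Q$ isolated, are essential, since neither alone suffices to make the union over all $Q$ finite.
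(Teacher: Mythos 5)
Your argument is correct and follows the paper's own proof in its two essential steps: rationality of the transit times together with distinctness of NTPs bounds the number of possible arc sequences for maximal arc-subpaths, and piecewise analyticity of $c_Q$ leaves only finitely many admissible starting times for each such sequence. The paper packages the second step as a pigeonhole contradiction and leaves the final assembly implicit, whereas you spell out explicitly why finitely many positioned maximal arc-subpaths force finitely many paths; this is a presentational difference, not a different route.
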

\begin{proof}
Due to Assumption~\ref{asum:rational}, we can assume without loss of generality that the transit times are integral. Therefore, each arc $(v,w)\in E^r$ appears at most $T$ times in any arc-subpath of an arbitrary path. In other words, every arc-subpath of any augmenting path contains at most $T\cdot~|E|$ arcs. Consequently, the number of
possible maximal arc-subpaths is bounded by a constant where two arc-subpaths that differ by the starting time are identified.

We now assume by contradiction that the cardinality of $\mathcal{P}_{\loc}$ is infinite. Hence there exists an infinite number of paths in $\mathcal{P}_{\loc}$ all containing the same
maximal arc-subpath~$Q$, but with different starting times. It then follows from Lemma \ref{lem:local-path} that
the cost function $c_Q$, given by \eqref{eq:cost-arc-path}, has an infinite number of local minimum points. This is a contradiction because $c_Q$ is a piecewise analytic function and has only a finite number of local extrema. This establishes the lemma.
\qed\end{proof}

Before we proceed with our discussion, let us make the following assumption.
\begin{asum}
\label{asum:cost}
The cost functions $c_e, e\in E$ are continuous.
\end{asum}
We suppose that this assumption holds throughout the rest of this and the next section. However, all our results hold for the case in which some cost functions have discontinuities by defining the cost of augmenting paths and cycles in a different, but complicated, way. We discuss further details in~Appendix~\ref{sec:appendix}.

We next show that $\mathcal{P}_{\loc}$ contains the least cost augmenting path from NTP $(s,0)$ to NTP $(t,T)$.

\begin{lemma}
\label{lem:path-exists-sink}
Let $P$ be an augmenting path from NTP $(s,0)$ to NTP $(t,T)$. Then there exists an augmenting path $P'\in\mathcal{P}_{\loc}$ with $c(P')\leq c(P)$.
\end{lemma}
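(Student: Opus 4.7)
The plan is to exploit compactness of the set of augmenting paths that share the node sequence of $P$, together with continuity of the cost, to produce a local least cost augmenting path $P^*$ with $c(P^*)\leq c(P)$; a representative $P'\in\mathcal{P}_{\loc}$ is then obtained from $P^*$ via the identification rule introduced just before Lemma~\ref{lem:analytic-nonempty-finite}.

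Concretely, I would fix the node sequence $v_1,\ldots,v_q$ of $P$, let $Q^1,\ldots,Q^m$ be the maximal arc-subpaths of $P$, and parameterize the augmenting paths with this node sequence by the tuple $\alpha=(\alpha_1,\ldots,\alpha_m)$ of starting times of the $Q^j$'s (all intermediate times inside a maximal arc-subpath are determined by the transit times). Let $D_P\subseteq[0,T]^m$ denote the set of admissible $\alpha$. Then $D_P$ is bounded, and it is closed because the augmenting property is by definition the closure of the open strict augmenting condition $\ca(P_\alpha)>0$; hence $D_P$ is compact. Since each $c_{Q^j}$ is continuous on $[0,T]$ by Assumption~\ref{asum:cost}, the cost $c(P_\alpha)=\sum_{j=1}^{m}c_{Q^j}(\alpha_j)$ is continuous on $D_P$, and by Weierstrass it attains a minimum at some $\alpha^*\in D_P$. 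Setting $P^*:=P_{\alpha^*}$ gives $c(P^*)\leq c(P)$.

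To see that $P^*$ is a local least cost augmenting path, I would observe that for each $j$ the value $\alpha^*_j$ minimizes $c_{Q^j}$ on the slice $\{\alpha_j:(\alpha^*_1,\ldots,\alpha_j,\ldots,\alpha^*_m)\in D_P\}$, which coincides with the shift interval $[a_j,b_j]$ introduced before Lemma~\ref{lem:local-path}. So $c_{Q^j}$ attains a local minimum at $\alpha^*_j$ on $[a_j,b_j]$. For non-maximal arc-subpaths the requirement of preserving the node sequence in $N(P,\epsilon)$ forces the shift parameter to equal $\theta_\ell$, so the shift interval degenerates to a point and the local minimum condition is trivially met; Lemma~\ref{lem:local-path} therefore yields that $P^*$ is a local least cost augmenting path. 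If for some $j$ the function $c_{Q^j}$ happens to be constant on an open neighborhood of $\alpha^*_j$, I would slide $\alpha^*_j$ to an endpoint of the maximal constancy interval of $c_{Q^j}$ containing it; piecewise analyticity (Assumption~\ref{asum:analytic}) guarantees such a finite endpoint exists, the identification rule preserves cost under this move, and at the new starting time $c_{Q^j}$ is not constant on any open neighborhood. Carrying this out for every $j$ produces $P'\in\mathcal{P}_{\loc}$ with $c(P')=c(P^*)\leq c(P)$.

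The step I expect to require the most care is verifying closedness of $D_P$: the augmenting condition is, a priori, a neighborhood condition rather than a closed one, so I would need to use piecewise analyticity of the residual capacities (which follows from Theorem~\ref{thm:analytic-flow} together with Assumption~\ref{asum:analytic}) to conclude that every augmenting $\alpha$ lies in the closure of the strict augmenting set. The ``not constant on any open neighborhood'' clause in the definition of $\mathcal{P}_{\loc}$ is a secondary subtlety, resolved by the sliding step above and again by piecewise analyticity, which ensures that each $c_{Q^j}$ has only finitely many maximal constancy intervals.
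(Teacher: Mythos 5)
Your route is genuinely different from the paper's. The paper proceeds by an iterative local-improvement procedure: it repeatedly picks one offending maximal arc-subpath $Q$, uses continuity of $c_Q$ (Assumption~\ref{asum:cost}) to move its starting time to a minimizer of $c_Q$ on $[a,b]$, deletes any augmenting cycles created by the shift, and argues termination. You instead do a one-shot Weierstrass argument: parameterize all augmenting paths with the node sequence of $P$ by the vector of starting times of the maximal arc-subpaths, argue the admissible set $D_P$ is compact (closedness follows, as you say, because ``augmenting'' is by definition membership in the closure of the strict-augmenting set for that node sequence), minimize the separable continuous cost, and read off the local-minimum property coordinatewise from the slices. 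When it works, your argument is cleaner because it treats all maximal arc-subpaths simultaneously and avoids the paper's termination argument. One remark: your excursion through Lemma~\ref{lem:local-path} and the claim that non-maximal arc-subpaths have degenerate shift intervals is both wrong and unnecessary. It is wrong because shifting a non-maximal arc-subpath simply introduces waiting at its two endpoint nodes (see the definition of $P|_Q(\alpha)$ with $\ell^*=\ell$, $r^*=r$), so the shift interval is generally nondegenerate; it is unnecessary because membership in $\mathcal{P}_{\loc}$ only requires the local-minimum and non-constancy conditions for \emph{maximal} arc-subpaths, which your slice argument plus the endpoint-sliding step already deliver.

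The genuine gap is that the minimizer $P_{\alpha^*}$ need not be a path. Your parameterization fixes the node sequence, but the node sequence of $P$ may visit the same node at several distinct times; as the starting times of the maximal arc-subpaths move through $D_P$, two such NTPs can collide, at which point the NTP sequence contains a cycle and is no longer a path in the paper's sense. Excluding these degenerate tuples from $D_P$ destroys closedness (the infimum of the cost may be approached exactly as two NTPs merge), so you cannot simply restrict $D_P$; you must admit walks, and then, like the paper, delete the resulting augmenting cycles and restart the compactness argument on the strictly shorter node sequence (finitely many restarts). This repair reintroduces an iteration and also inherits the cost-monotonicity issue (deleting a cycle only preserves $c(P')\leq c(P)$ when the deleted cycle has nonnegative cost) — a point the paper's own proof also passes over silently. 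Without some treatment of NTP collisions, your argument as written does not produce a member of $\mathcal{P}_{\loc}$.
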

\begin{proof}
If $P\in \mathcal{P}_{\loc}$, then we are done. So we consider the case where $P$ is not in~$\mathcal{P}_{\loc}$. In this case we iteratively apply the following procedure to construct an augmenting path~$P'\in \mathcal{P}_{\loc}$  with $c(P')\leq c(P)$.
\begin{enumerate}[label = (\roman*)]
\item\label{it:nonstoping1}
  Let $Q:(v_k,\theta_k),\ldots,(v_r,\theta_r)$ be a maximal arc-subpath of $P$ such that the cost function $c_{Q}$ does not have a local minimum at $\theta_k$ or is constant on an open interval containing $\theta_k$. Notice that such an arc-subpath exists because of the definition of $\mathcal{P}_{\loc}$ and the fact that $P$ is not in $\mathcal{P}_{\loc}$. Further, choose $P'$ such that it contains a minimal number of arcs.
\item
  Because of Assumption~\ref{asum:cost}, the function $c_{Q}$ is continuous. Thus it takes its minimum over $[a,b]$ at some point~$\theta$. If it has several local minimum, then choose $\theta$ to be the one with maximum value.
\item
  Let $P|_{Q}(\theta)$ be the augmenting path from NTP $(0,1)$ to NTP $(t,T)$ obtained from $P$ by shifting the arc-subpath $Q$ by $\theta_k-\theta$ time units. Since $P|_{Q}(\theta)$ may contain augmenting cycles, we delete all of them in $P|_{Q}(\theta)$.
\item\label{it:nonstoping2} Set $P:=P|_{Q}(\theta)$. If $P$ is not in $\mathcal{P}_{\loc}$, then go to \ref{it:nonstoping1}.
\end{enumerate}
The above procedure terminates after a finite number of iterations and the resulting augmenting path $P$ is contained in $\mathcal{P}_{\loc}$. Further, in each iteration the cost of $P$ does not increase which proves the lemma.
\qed\end{proof}

As a consequent of Lemmas \ref{lem:analytic-nonempty-finite} and \ref{lem:path-exists-sink}, we obtain the following result:
\begin{theorem}
\label{thm:CDSP-existence}
There exists a least cost augmenting path from NTP $(s,0)$ to NTP $(t,T)$. In particular, an augmenting path in $\mathcal{P}_{\loc}$ with minimum cost is the desired path.
\end{theorem}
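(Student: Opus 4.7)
The plan is to derive the theorem directly from Lemmas \ref{lem:analytic-nonempty-finite} and \ref{lem:path-exists-sink}, with essentially no additional technical work: the heavy lifting has already been done in establishing those two lemmas. First I would dispose of the trivial case in which no augmenting path from $(s,0)$ to $(t,T)$ exists, for which the statement is vacuous, and thereafter assume that at least one augmenting path $P$ exists.

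Next I would use Lemma \ref{lem:path-exists-sink} to argue that $\mathcal{P}_{\loc}$ is nonempty: applying the lemma to the given $P$ produces some $P' \in \mathcal{P}_{\loc}$ with $c(P') \leq c(P)$. Combining this with Lemma \ref{lem:analytic-nonempty-finite}, the set $\mathcal{P}_{\loc}$ is a finite nonempty collection, so the minimum $\min\{c(P) : P \in \mathcal{P}_{\loc}\}$ is attained by some path $P^* \in \mathcal{P}_{\loc}$.

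It then remains to verify that $P^*$ is a least cost augmenting path over the entire (possibly infinite) collection of augmenting paths from $(s,0)$ to $(t,T)$, not merely over $\mathcal{P}_{\loc}$. For this, I would take an arbitrary augmenting path $P$ and apply Lemma \ref{lem:path-exists-sink} once more to obtain $P'\in \mathcal{P}_{\loc}$ with $c(P') \leq c(P)$. The choice of $P^*$ then yields $c(P^*) \leq c(P') \leq c(P)$, which delivers the desired global inequality $c(P^*) \leq c(P)$ and proves both assertions of the theorem.

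I expect no genuine obstacle at this step. All the difficult content is buried inside the lemmas: Lemma \ref{lem:analytic-nonempty-finite} exploited Assumption~\ref{asum:rational} together with piecewise analyticity to bound the number of local minima of $c_{Q}$, while Lemma \ref{lem:path-exists-sink} used the continuity assumption~\ref{asum:cost} to guarantee that the shifting procedure halts after finitely many iterations without ever increasing the cost. The theorem itself is then the purely set-theoretic observation that a finite nonempty subset which cost-dominates the ambient class from below contains a global minimizer, and the proof should be a single short paragraph.
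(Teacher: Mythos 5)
Your proposal is correct and matches the paper exactly: the paper gives no separate proof, stating the theorem as an immediate consequence of Lemmas \ref{lem:analytic-nonempty-finite} and \ref{lem:path-exists-sink}, and your argument (finiteness of $\mathcal{P}_{\loc}$ plus the cost-domination property yields a global minimizer) is precisely that deduction.
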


We can assume that the network $G$ contains an augmenting path from NTP $(s,0)$ to every other NTP. This assumption imposes no loss of generality since this is satisfied, if necessary, by introducing an artificial  node $a$ with infinite storage capacity and zero storage cost, and adding artificial arc $(s,a)$ joining node $s$ to node $a$ and artificial arcs $(a,v), v\in V\setminus\{s\}$, linking node $a$ to all nodes in $V\setminus\{s\}$. Each artificial arc has a zero transit time, a large cost, and large capacity. It is clear that no artificial arc would appear in a least cost augmenting path from $(s,0)$ to any NTP $(v,\theta)$ unless network $G$ contains no augmenting path from $(s,0)$ to $(v,\theta)$ without artificial arcs. Therefore, Lemma \ref{lem:analytic-nonempty-finite} as well as Lemma \ref{lem:path-exists-sink} remain true if NTP $(t,T)$ is replaced by every other NTP $(v,\theta)$. This leads to the main result of this section.

\begin{theorem}
\label{thm:CDSP-analytic}
Suppose that $x$ is a piecewise analytic solution for \eqref{pro:MCFP}. For each NTP $(v,\theta)$, let $d_v(\theta)$ be the cost of a least cost augmenting path from $(s,0)$ to $(v,\theta)$. Then, for each node $v\in V$, the label $d_v(\theta)$ exists for all $\theta\in [0,T]$ and the function $d_v:[0,T]\rightarrow \rr$ is piecewise analytic.
\end{theorem}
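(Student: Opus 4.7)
The existence half is immediate by repeating the arguments of Lemma~\ref{lem:analytic-nonempty-finite}, Lemma~\ref{lem:path-exists-sink} and Theorem~\ref{thm:CDSP-existence} with the sink $(t,T)$ replaced by $(v,\theta)$: the artificial-arc construction described immediately before the theorem guarantees that there is always some augmenting path from $(s,0)$ to $(v,\theta)$, and the finiteness and cost-non-increasing arguments then produce a least cost one. Hence $d_v(\theta)\in\rr$ for every $\theta\in[0,T]$.

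For piecewise analyticity, my plan is to enumerate the finitely many candidate \emph{types} of local least cost augmenting paths terminating at $(v,\theta)$ and to analyse how their cost depends on $\theta$. By Assumption~\ref{asum:rational} we may rescale time so that all transit times and $T$ are integers. A type $\sigma$ records a node sequence $s=v_1,\dots,v_q=v$, a designation of each consecutive pair as arc- or node-linked (which fixes the decomposition into maximal arc-subpaths $Q_1,\dots,Q_m$), and, for each $Q_i$ whose starting time is not forced by the endpoint $\theta$, a choice of one of the finitely many local minima of the piecewise analytic cost function $c_{Q_i}$ defined in \eqref{eq:cost-arc-path}. Exactly as in the proof of Lemma~\ref{lem:analytic-nonempty-finite}, the number of maximal arc-subpaths (up to identification of constant-cost shifts) is bounded by a constant depending only on $|E|$ and $T$, and each $c_{Q_i}$ has only finitely many local minima on $[0,T]$, so the collection of types is finite.

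For a fixed type $\sigma$ the cost $c_\sigma(\theta)$ of the corresponding candidate path is piecewise analytic in $\theta$: when the path terminates by a node-link at $v$, every arc-subpath start is a pre-chosen local minimum and $c_\sigma$ is constant, whereas when it terminates by an arc-link we obtain $c_\sigma(\theta)=c_{Q_m}(\theta-\tau_{Q_m})+\mathrm{const}$, which is piecewise analytic by Assumption~\ref{asum:analytic}. The set of $\theta$ on which $\sigma$ actually yields a feasible augmenting path belonging to the natural extension $\mathcal{P}_{\loc}(v,\theta)$ of $\mathcal{P}_{\loc}$ is cut out by piecewise analytic inequalities: positivity of $u^\text r_e$ at the relevant time along each arc-link, strict positivity of $U^\text r_{v_k}$ or $L^\text r_{v_k}$ across each storage interval, and matching of the endpoint time. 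Each such constraint partitions $[0,T]$ into finitely many intervals, so intersecting them leaves a finite union of intervals on which $\sigma$ is realisable. Since $d_v(\theta)=\min_\sigma c_\sigma(\theta)$ is then the pointwise minimum of finitely many piecewise analytic functions over finitely many piecewise analytic domains, it is itself piecewise analytic on $[0,T]$.

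The main obstacle, in my view, is verifying that the decomposition of $[0,T]$ produced by the feasibility analysis and by the min operation has only finitely many breakpoints. Although each individual defining inequality is piecewise analytic in $\theta$, one must rule out the possibility that they, or the finitely many functions $c_\sigma$, oscillate relative to each other infinitely often; this is exactly where Assumption~\ref{asum:analytic} is indispensable, because a piecewise analytic function can change sign, and two such functions can coincide without being identical, only at finitely many points of $[0,T]$.
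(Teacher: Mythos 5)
Your proposal is correct and follows essentially the same route as the paper: your ``types'' $\sigma$ (prefix with arc-subpath starting times pinned to chosen local minima, plus a final arc-subpath whose start slides with $\theta$) are exactly the paper's equivalence classes $[P]=(P_1,P_2)$ under the relation $\sim$ on $\mathcal{P}_v$, the finiteness argument is the same appeal to Lemma~\ref{lem:analytic-nonempty-finite}, and $d_v$ is obtained in both cases as a pointwise minimum of finitely many piecewise analytic class/type costs. If anything you are slightly more careful than the paper on the one point it glosses over, namely that the set of $\theta$ for which a given class is realisable (where the paper sets $c_{[P]}(\theta)=\infty$) is a finite union of intervals cut out by piecewise analytic residual-capacity constraints.
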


\begin{proof}
 The existence of $d_v(\theta)$ follows from Lemmas~\ref{lem:analytic-nonempty-finite} and \ref{lem:path-exists-sink} for each NTP $(v,\theta)$. It thus remains to show that $d_v$ is piecewise analytic on $[0,T]$ for each $v\in V$.  In the following we fix a node $v\in V$. Similar to the definition of $\mathcal{P}_{\loc}$, define $\mathcal{P}_{\loc}(\theta)$ as the set of augmenting paths~$P$ from $(s,0)$ to $(v,\theta)$ such that for each maximal arc-subpath~$Q$ of $P$ with starting time~$\theta'$ the function $c_{Q}$ has a local minimum at $\theta'$ and is not constant on any open neighborhood containing~$\theta'$. Then~$\mathcal{P}_v:=\cup_{\theta\in [0,T]} \mathcal{P}_{\loc}(\theta)$ contains (nearly) all least cost augmenting paths for any point in time~$\theta \in [0,T]$.

  Next we define an \emph{equivalence relation} $\sim$ on $\mathcal{P}_{v}$. let $P:(v_1,\theta_1),\ldots,(v_q,\theta_q)$ and $P':(v'_1,\theta'_1),\ldots,(v'_{q'},\theta'_{q'})$ be two members of $\mathcal{P}_v$. We define $\sim$ on $\mathcal{P}_{v}$ by $P\sim P'$ if and only if $q=q'$ and there is some $r\in \{1,\ldots,q-1\}$ such that
\begin{enumerate}[label = (\roman*)]
 \item $(v'_k,\theta'_k)=(v'_k,\theta'_k)$ for each $k\leq r$, $v_r=v_{r+1}=v'_{r}=v'_{r+1}$ and $\theta_{r+1}\neq \theta'_{k+1}$,
 \item $v_k=v'_k$ for each $k\geq r+1$ and the NTP sequences $(v_{r+1},\theta_{r+1}),\ldots,(v_q,\theta_q)$ and $(v'_{r+1},\theta'_{r+1}),\ldots,(v'_{q'},\theta'_{q'})$ are arc-subpaths of $P$ and $P'$, respectively.
\end{enumerate}

  Roughly speaking, $P$ and $P'$ are equivalent if they differ only in the starting time of the last maximal arc-subpath. For an equivalence class $[P]$ we denote by $P_1$ the path consisting of the first $r$ NTPs of~$P$ and by $P_2$ the arc-path consisting of the last $q-r+1$ NTPs of $P$. Note that $P_1$ and $P_2$ can be the empty path: if~$P$ is an arc-path we put it in the equivalence class for which $P_1=\emptyset$ and $P_2=P$ and if the the last tow NTPs in $P$ are node-linked, we put it in the equivalence class for which $P_1=P$ and $P_2=\emptyset$. Therefore, each equivalence class $[P]$ is identified by two paths $P_1$ and $P_2$ and these paths are well-defined in the sense that they coincide for any member of $[P]$. Moreover, any augmenting path in $[P]$ is obtained by concatenating $P_1$ and $P_2$ and changing the starting time of $P_2$. More precisely, $[P]$ contains only the paths $P|_{P_2}(\theta-\tau_{P_2}), \theta\in [0,T]$, for which $P|_{P_2}(\theta-\tau_{P_2})\in \mathcal{P}_v$. 

We next show that the quotient set $\mathcal{P}_v/\sim$ contains a finite number of equivalence classes. Each equivalence class in $\mathcal{P}_v/\sim$ is defined by two paths $P_1$ and $P2$ with the following properties:
\begin{enumerate}[label = (\theenumi)]
\item\label{it:p1} The first NTP in $P_1$ is $(0,1)$, the last NTP in $P_2$ is $(v,\theta)$ for some $\theta\in [0,T]$, the last two NTPS in $P-1$ are node-linked, and $P_2$ is an arc-path. Moreover, the last NTP in $P_1$ and the first NTP in $P_2$ coincide.  
\item\label{it:p2}  For each maximal arc-subpath~$Q$ of $P_1$ with starting time~$\theta'$ the function $c_{Q}$ has a local minimum at $\theta'$ and is not constant on any open neighborhood containing~$\theta$. 
\item\label{it:p3}  The function $c_{P_1}$ has a local minimum at $\theta'$ and is not constant on any open neighborhood containing~$\theta'$, where $\theta'$ is starting time of $P_1$.
\end{enumerate}
It follows from the last two properties, by a similar argument as in the proof of Lemma~\ref{lem:analytic-nonempty-finite}, that there exists only a finite number of possibilities for $P_1$ and $P_2$. Hence, $\mathcal{P}_v/\sim$ is a finite set.


We now consider an equivalence class $[P]\in \mathcal{P}_v/\sim$, given by two paths $P_1$ and $P_2$. We define a cost function $c_{[P]}:[0,T]\rightarrow \rr$ by
  \begin{align*}
    c_{[P]}(\theta) := \begin{cases}
                                c(P_1) + c_{P_2}(\theta-\tau_{P_2}) & \text{if  $P|_{P_2}(\theta-\tau_{P_2})\in \mathcal{P}_v$}, \\
                                \infty & \text{otherwise}.
                              \end{cases}
  \end{align*}
In other words, $c_{[P]}(\theta):=c(P')$ if there exists a path $P'$ in $[P]$, whose last NTP is $(v,\theta)$; and $c_{[P]}(\theta)=\infty$ if such a path does not exist in $[P]$. Since $P_2$ is an arc-path, $c_{P_2}(\theta-\tau_{P_2})$ is piecewise analytic, so is the cost function $ c_{[P]}$.

Following the above discussion, we can express the cost of each path $P\in \mathcal{P}_v$ in terms of the cost function $c_{[P]}$ as $c(P) = c_{[P]}(\theta)$, where $\theta$ is the last time that we reach node $v$ along~$P$, i.e., $(v,\theta)$ is the last NTP of $P$. It implies that 
\begin{align*}
d_v(\theta) &= \min_{P\in \mathcal{P}_v/\sim}\{c_{[P]}(\theta)\} &&\forall \theta\in [0,T].
\end{align*}
Therefore $d_v$ is piecewise analytic since it is the minimum of a finite number of piecewise analytic functions.
  \qed 
\end{proof}

\section{Optimality conditions and strong duality}
\label{sec:OptCond}

In this section,we turn our attention to the optimality conditions for MCFP. In particular, we first show that not only conditions (CS1)-(CS4) are sufficient for optimality, but also are necessary under Assumptions \ref{asum:rational} and  \ref{asum:analytic}. We then extend the negative cycle optimality condition for MCFP and derive a strong duality result between \eqref{pro:MCFP} and \eqref{pro:MCFP*}. We start with the following lemma which will be used in the proof of the subsequent theorems.

\begin{lemma}
\label{lem:NGOC}
Let $x$ be a piecewise analytic flow. Then $x$ is not optimal if there exists a negative augmenting cycle with respect to $x$.
\end{lemma}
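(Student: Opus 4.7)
The plan is to push a small pulse of flow around a strict augmenting cycle in the neighborhood of the given negative augmenting cycle and show that the resulting flow is feasible with strictly smaller cost, contradicting optimality of $x$.

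Fix a negative augmenting cycle $C$ with respect to $x$, so $c(C)<0$. By the definition of an augmenting cycle, the $\epsilon$-neighborhood $N(C,\epsilon)$ contains a strict augmenting cycle $C'$ for every $\epsilon>0$. Since the cost functions $c_e$ are continuous by Assumption~\ref{asum:cost}, the cost $c(C')$ converges to $c(C)$ as $\epsilon\to 0$. Hence for $\epsilon$ sufficiently small we obtain a strict augmenting cycle $C':(v_1,\theta_1),\ldots,(v_q,\theta_q)=(v_1,\theta_1)$ with $c(C')<0$, and we set $\delta:=\ca(C')/2>0$.

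For a parameter $\eta>0$ to be chosen later, define a perturbed flow $x'$ by modifying $x$ only on thin time windows around each arc-linked transition of $C'$: for each index $k$ with $e_k:=(v_k,v_{k+1})\in E^{\text r}$, if $e_k\in E$ then add $\delta$ to $x_{e_k}(\vartheta)$ for $\vartheta\in[\theta_k,\theta_k+\eta]$, while if $e_k=\overleftarrow{e}\in\overleftarrow{E}$ for some $e\in E$ then subtract $\delta$ from $x_{e}(\vartheta)$ for $\vartheta\in[\theta_k-\tau_e,\theta_k-\tau_e+\eta]$. The storage $Y'_v$ is then determined automatically through the flow conservation constraint~\eqref{eq:FCC}. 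Since the closed walk $C'$ ensures that the contributions to each node telescope, the only storages that differ from those of $x$ lie on the node-linked segments of $C'$, and there they differ by amounts of order $\delta\eta$.

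Next I would verify feasibility of $x'$ for all $\eta$ small enough. The bound $x'_e\leq u_e$ is preserved on each modified forward window because $\delta\leq u^{\text r}_{e_k}(\theta_k)/2$ and both $u_{e_k}$ and $x_{e_k}$ are piecewise analytic (Theorem~\ref{thm:analytic-flow}), so $u^{\text r}_{e_k}\geq\delta$ on a neighborhood of $\theta_k$; the bound $x'_e\geq 0$ on backward windows holds for the symmetric reason. The storage bounds $0\leq Y'_v\leq U_v$ at each node-linked segment follow from the strict positivity of the residual storage capacities $U^{\text r}_{v_k}$ and $L^{\text r}_{v_k}$ controlled by $\ca(C')$, together with the continuity of $U_{v_k}$, once $\delta\eta$ is small. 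For the cost, the definition of $c(C')$ from~\eqref{eq:cost} together with the continuity of the $c_{e_k}$ yields
\begin{align*}
V[\eqref{pro:MCFP},x']-V[\eqref{pro:MCFP},x]=\delta\eta\,c(C')+o(\eta)\quad\text{as }\eta\to 0^+,
\end{align*}
which is strictly negative for all sufficiently small $\eta$ because $c(C')<0$. This produces a feasible flow with strictly smaller cost than $x$, contradicting its optimality.

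The main obstacle I expect is the storage bookkeeping at the intermediate nodes of $C'$: augmenting on an arc $e_k$ changes both endpoint storages $Y_{v_k}$ and $Y_{v_{k+1}}$ linearly over intervals of length $\eta$, and these changes must telescope correctly around the cycle while respecting $[0,U_v]$ pointwise on every node-linked segment. Making this rigorous requires the strict positivity of $\ca(C')$ at every node-linked segment together with the continuity of $U_v$, and careful tracking of the time shifts induced by the transit times $\tau_{e_k}$ along the cycle.
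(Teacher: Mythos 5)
Your proposal is correct and follows essentially the same route as the paper: first replace the negative augmenting cycle by a nearby strict augmenting cycle that still has negative cost (using continuity of the cost functions and the neighborhood structure), then push a small constant-rate pulse over thin time windows along that cycle, verifying feasibility from the strictly positive residual capacities and the strict cost decrease from right-continuity of the piecewise analytic data. The paper's proof is the same construction, obtaining the strict cycle by shifting each maximal arc-subpath slightly and defining the pulse with explicit constants $\delta$, $\gamma$, $z^*$.
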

\begin{proof}
Suppose that $W:(v_1,\theta_1),\ldots,(v_q,\theta_q)$ is a negative augmenting cycle under $x$.
We show that there exists a cycle $W'$ with $\ca(W')>0$ and \mbox{$c(W')<0$}.
To prove this, let $Q:(v_\ell,\theta_\ell),\ldots,(v_r,\theta_r)$ be a maximal arc-subpath of $W$. Then we know that there is some $\epsilon>0$ such that $W|_Q(\alpha)$ is also an augmenting cycle for each $\alpha$ in $(\theta_\ell-\epsilon,\theta_\ell)$ or $(\theta_\ell,\theta_\ell+\epsilon)$. We assume without loss of generality that $W|_Q(\alpha)$ is an augmenting cycle for each $\alpha$ in $(\theta_\ell-\epsilon,\theta_\ell)$. Then for some $\alpha\in (\theta_\ell-\epsilon,\theta_\ell)$, we have $|c(Q(\alpha))-c(Q)|<\epsilon$. Here $Q(\alpha)$ denotes the arc path $(v_\ell,v_{\ell+1}),\ldots, (v_{r-1},v_r)$ with starting time $\alpha$. More precisely, we have $Q(\alpha):(v_\ell,\alpha_\ell),\ldots,(v_r,\alpha_r)$ where $\alpha_\ell=\alpha$ and $\alpha_{k+1}(\alpha)=\alpha_{k}+\tau_{v_{k},v_{k+1}}$ for $k=\ell,\ldots,r-1$. Further, $\alpha$ can be chosen in such a way that $\ca(Q(\alpha))>0$. Now we consider the cycle $W|_Q(\alpha)$ and repeat the above procedure for all remaining maximal arc-subpaths of $W$. Let $W'$ be the resulting cycle. It is easy to see that for sufficiently small $\epsilon>0$, we get $\ca(W')>0$ and $c(W')<0$.

Based on the above observation, we assume that $\ca(W)>0$ and \mbox{$c(W)<0$}. For every
$k=2,\ldots,q$ with $e_k=(v_{k-1}, v_k)\in E^r$, there exist $\delta_k, \gamma_k>0$
such that
\begin{align*}
u^r_{e_k}(\theta)&\geq \delta_k && \forall\theta\in [\theta_{k-1},\theta_{k-1}+\gamma_k)~.
\end{align*}

Let
 $\delta$ and $\gamma$ be the minimum of $\delta_k$ and $\gamma_k$,
 respectively, and define $\epsilon_k=2\delta\gamma$.
 Moreover, for every
$k=2,\ldots,q$ with $v_{k-1}=v_k$, there exist $\delta_k,\gamma_k>0$
such that
\begin{align*}
U^r_{v_k}(\theta)&\geq \delta_k && \forall\theta\in (\theta_{k-1}-\gamma_k,\theta_{k}+\gamma_k) &\text{if } \theta_{k-1}<\theta_k~,\\
L^r_{v_k}(\theta)&\geq \delta_k&& \forall \theta\in (\theta_{k}-\gamma_k,\theta_{k-1}+\gamma_k) &\text{if } \theta_{k-1}>\theta_k~.
\end{align*}
Let
$\delta$ and $\gamma$ be the minimum of $\delta_k$ and $\gamma_k$,
respectively. We then define
\begin{align*}
\epsilon_k&:=\begin{cases}
                    2\delta\gamma, & \hbox{if $v_{k-1}\neq v_k$,}\\
                    \delta, & \hbox{if $v_{k-1}=v_k$,}
		\end{cases}
\end{align*}
for $k=2,\ldots,$ and let
\begin{align}
\label{eq:z*}
z^*:=\frac{1}{2\gamma}\min\{\epsilon_2,\epsilon_3,\ldots,\epsilon_{q}\}.
\end{align}
We now define
\begin{align*}
z_{e}(\theta):=\begin{cases}
                   z^* & \hbox{if $e=(v_{k},v_{k+1})\in E, \theta\in  [\theta_{k},\theta_{k}+\gamma)$ for $k=1,\ldots,q-1$,}\\
                    -z^* & \hbox{if $e=(v_{k+1},i=v_{k})\in E,\theta\in [\theta_{k+1},\theta_{k+1}+\gamma)$, for $k=1,\ldots,q-1$,}\\
                    0 & \hbox{otherwise.}
                \end{cases}
\end{align*}
We can easily see that $x+z$ is a feasible flow.

Thus far we have seen that another feasible flow $\bar{x}=x+z$ can be obtained by augmenting a constant flow rate $z^*$, given by \eqref{eq:z*}, along the arcs involved in the cycle
$W$. The \emph{cost of augmenting}, that is, the change in
the objective function value in moving from $x$ to $\bar{x}$, is computed by
 $z^*\sum_{k=2}^{q}{\zeta_k}$, where
\begin{align*}
\zeta_k:=\begin{cases}
                 \int_{\theta_{k}}^{\theta_{k}+\gamma}  {c_{e_k}(\theta)}\,dt & \hbox{if $e_k=(v_{k}v_{k+1})\in E$,}\\
                 \int_{t_{k+1}}^{t_{k+1}+\gamma} {-c_{e_k}(\theta)}\,dt & \hbox{if $e_k=(v_{k+1},v_{k})\in E$,}\\
		 0 & \hbox{otherwise.}
         \end{cases}
\end{align*}
for $k=2,\ldots,q$. Since $z^*>0$, $\bar{x}$ will be a strictly improved feasible solution than
$x(\theta)$ if $\sum_{k=2}^{q}{\zeta_k}<0$. We know that the cost functions $c$ are
piecewise analytic and right-continuous. This implies
$\sum_{k=2}^{q}{\zeta_k}<0$ for $\gamma$ small enough since $c(W)<0$.
This establishes the lemma.
\qed
\end{proof}

We are now in a position to prove the main results of this paper.

\begin{theorem}
\label{thm:RC}
Let $x$ be a piecewise analytic flow.
Then $x$ is optimal for \eqref{pro:MCFP} if and only if there is a piecewise analytic function $\pi$ which is complementary slack with $x$.
\end{theorem}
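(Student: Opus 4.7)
The sufficiency direction ($\Leftarrow$) is essentially immediate from Theorem~\ref{thm:CS}: a piecewise analytic $\pi$ is automatically of bounded variation and right-continuous on $[0,T]$, and pairing it with $\rho$ defined via \eqref{eq:rho} (after the normalization $\pi_v(T)=0$) produces a dual-feasible pair to which Theorem~\ref{thm:CS} applies, since (CS1)--(CS4) hold by hypothesis.

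The necessity direction ($\Rightarrow$) is the real content, and the plan is to transplant the classical static minimum-cost-flow construction: take the node potential $\pi_v(\theta)$ to be the negative of the least-cost augmenting-path distance from a fixed source NTP to $(v,\theta)$. Lemma~\ref{lem:NGOC} tells us that optimality of $x$ rules out negative augmenting cycles; after introducing the artificial source~$s$ described just before Theorem~\ref{thm:CDSP-analytic}, every NTP $(v,\theta)$ is reachable from $(s,0)$ by an augmenting path. Define
\[
d_v(\theta) := \text{cost of a least-cost augmenting path from }(s,0)\text{ to }(v,\theta),
\]
which by Theorem~\ref{thm:CDSP-analytic} is well-defined and piecewise analytic on $[0,T]$, and set $\pi_v := -d_v$.

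The bulk of the proof is verifying that this $\pi$ is complementary slack with~$x$. For an arc $e=(v,w)$, if $x_e(\theta) < u_e(\theta)$ then the forward arc is residual at $(v,\theta)$, so concatenating a least-cost augmenting path ending at $(v,\theta)$ with $e$ gives $d_w(\theta+\tau_e) \leq d_v(\theta) + c_e(\theta)$, i.e.\ $c_e(\theta) - \pi_v(\theta) + \pi_w(\theta+\tau_e) \geq 0$, which is (CS2) by contrapositive. The symmetric argument using the backward arc $\overleftarrow{e}$ at NTP $(w,\theta+\tau_e)$ with cost $-c_e(\theta)$ yields $d_v(\theta) \leq d_w(\theta+\tau_e) - c_e(\theta)$ and hence (CS1). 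For (CS3), if $Y_v(\theta)>0$ then by continuity of $Y_v$ the lower residual capacity $L^\text r_v$ is strictly positive on a neighborhood of $\theta$, so backward node-linking is available and gives $d_v(\theta_1) \leq d_v(\theta_2)$ for $\theta_1 < \theta_2$ near $\theta$; equivalently $\pi_v$ is non-increasing there, so $\pi_v^+$ cannot be strictly increasing at $\theta$. Condition (CS4) is the symmetric statement obtained from $U^\text r_v>0$ and forward node-linking when $Y_v(\theta) < U_v(\theta)$.

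The main obstacle I expect is justifying these path-concatenation inequalities in the \emph{augmenting} (rather than \emph{strict} augmenting) setting: as Example~\ref{ex:noLCsAPath} shows, the natural concatenated object may have zero residual capacity while nearby paths remain genuine strict augmenting paths, so inequalities like $d_w(\theta+\tau_e) \leq d_v(\theta) + c_e(\theta)$ need to be obtained by passing through the $\epsilon$-neighborhood framework of Section~\ref{sex:SAP} and using the continuity of the cost functions (Assumption~\ref{asum:cost}) to take limits. A secondary technical point is arranging the boundary normalization $\pi_v(T)=0$ demanded by \eqref{pro:MCFP*}, which can be absorbed into the choice of the artificial source arcs.
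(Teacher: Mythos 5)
Your proposal is correct and follows essentially the same route as the paper: sufficiency via Theorem~\ref{thm:CS}, and for necessity the construction $\pi_v:=-d_v$ from least-cost augmenting path labels (Theorem~\ref{thm:CDSP-analytic}), with Lemma~\ref{lem:NGOC} ruling out negative augmenting cycles and concatenation/triangle-inequality arguments (taking limits through nearby times using right-continuity) verifying the complementary slackness conditions. The paper phrases the verification through intermediate conditions (RC1)--(RC4) and a Jordan-decomposition step for the node conditions, but this is the same argument you outline.
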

\begin{proof}
The necessary part of the theorem follows immediately from Theorem ~\ref{thm:CS}. For the sufficiency, suppose that $x$ with corresponding storage $Y$ is optimal. It follows from Lemma~\ref{lem:NGOC} that there are no negative augmenting cycles with respect to $x$. For each node $v\in V$, we consider the function $d_v:[0,T]\rightarrow \rr$, where $d_v(\theta)$ is the cost of a least cost augmenting path from $(s,0)$ to $(v,\theta)$. By Theorem \ref{thm:CDSP-analytic}, the function $d_v$ is well deffined and is piecewise analytic. We now define $\pi_v:=-d_v, v\in V$ and show that the following conditions are met for each arc $e=(v,w)$ and each node $v\in V$ over the time interval $[0,T]$:
\begin{enumerate}[label = (RC\arabic*), leftmargin = *]
\item\label{it:RC1} if $x_e(\theta)>0$, then $c_{e}(\theta)-\pi_v(\theta)+\pi_w(\theta+\tau_e)\leq 0$;
\item\label{it:RC2} if $x_e(\theta)<u_{e}(\theta)$, then $c_{e}(\theta)-\pi_v(\theta)+\pi_w(\theta+\tau_e)\geq 0$;
\item\label{it:RC3} if $Y_v(\theta)>0$ on $(a,b)$, then $\pi_v$ is monotonic decreasing on $(a,b)$;
\item\label{it:RC4} if $Y_v(\theta)<U_v(\theta)$ on $(a,b)$, then $\pi_v$ is monotonic increasing on $(a,b)$.
\end{enumerate}
Suppose by contradiction that the condition \ref{it:RC1} does not hold, that is, there are some arc $e=(v,w)$ and some point in time $\theta$ such that $x_e(\theta)>0$, but $c_{e}(\theta)-\pi_v(\theta)+\pi_w(\theta+\tau_e)<0$ or equivalently $c_{e}(\theta)+d_v(\theta)-d_w(\theta+\tau_e)<0$. Since $x$, $c$, and $d$ are piecewise analytic and thus right-continuous, there is some $\epsilon>0$ for which $x_{e}(\vartheta)>0$ and $c_{e}(\vartheta)+d_v(\vartheta)-d_w(\vartheta+\tau_e)<0$ for each $\vartheta\in [\theta,\theta+\epsilon)$. Let us fix a point $\vartheta\in (\theta,\theta+\epsilon)$ and let $P:(v_1,\theta_1), \ldots, (v_q,\theta_q)$ be a least cost augmenting path from NTP $(0,1)$ to NTP $(v,\vartheta)$. We now consider the augmenting walk $P':(v_1,\theta_1), \ldots, (v_q,\theta_q), (w,\vartheta+\tau_e)$ from NTP $(0,1)$ to NTP $(w,\vartheta+\tau_e)$ with augmenting cost $c(P')=d_v(s)+c_{e}(\vartheta)$. Since $d_w(\vartheta+\tau_e)$ is the cost of the least cost augmenting path from $(s,0)$ to $(w,\vartheta+\tau_e)$ and there are no negative augmenting cycles under $x$, we get $d_w(\vartheta+\tau_e)\leq d_v(\vartheta)+c_{e}(\vartheta)$. This is a contradiction and so the condition \ref{it:RC1} must hold. In a similar way, we can show that the conditions \ref{it:RC2}--\ref{it:RC4} are satisfied.

Obviously, conditions \ref{it:RC1} and \ref{it:RC2} are equivalent to conditions \ref{it:CS1} and \ref{it:CS2}, respectively. We next show that the condition \ref{it:RC3} implies the condition~\ref{it:CS3}. To end this, assume that for some node $v$, the function $\pi_v$ is strictly increasing at some point $\theta$. Since $\pi_v$ is piecewise analytic and consequently of bounded variation, there are two monotonic increasing and piecewise analytic functions $\mu$ and $\eta$ for which $\pi_v=\mu-\eta$. Moreover, since $\pi_v$ is strictly increasing at $\theta$, there is an open interval $(a,b)$ containing $\theta$ such that $\mu$ is strictly increasing at each point in $(a,b)$ and $\eta$ is constant over $(a,b)$. It now follows from condition \ref{it:RC3} that $Y_v$ is identically zero on $(a,b)$. This proves that the condition \ref{it:RC3} must hold. In a similar way, we can show that the condition \ref{it:RC4} implies the condition~\ref{it:CS4}. Therefore, the pair $x$ and $\pi$ satisfies the conditions \ref{it:CS1}--\ref{it:CS4}, establishing the theorem.
\qed\end{proof}

The previous theorem shows that the conditions~\ref{it:RC1}--\ref{it:RC4} are sufficient and necessary for optimality of a given flow $x$.

\begin{theorem}[Negative Cycle Optimality Condition]
\label{thm:NCOC}
Suppose that $x$ is a piecewise analytic flow. Then $x$ is optimal for \eqref{pro:MCFP} if and only if there is no negative augmenting cycle under $x$.
\end{theorem}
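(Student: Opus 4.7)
The plan is to prove the two directions separately and, in both cases, to reduce to results already in hand rather than redo the heavy lifting.

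For the forward direction (\emph{optimal} $\Rightarrow$ \emph{no negative augmenting cycle}), I would simply take the contrapositive of Lemma~\ref{lem:NGOC}: if some negative augmenting cycle existed under $x$, then Lemma~\ref{lem:NGOC} would produce a strictly cheaper feasible flow $\bar x = x+z$, contradicting the optimality of $x$.

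For the backward direction (\emph{no negative augmenting cycle} $\Rightarrow$ \emph{optimal}), I want to exhibit a piecewise analytic $\pi$ that is complementary slack with $x$ and then invoke Theorem~\ref{thm:CS}. My plan is to reuse verbatim the construction already carried out in the sufficiency argument of Theorem~\ref{thm:RC}: pick any source node $s$, and for every node $v\in V$ define
\[
  \pi_v(\theta) \;:=\; -d_v(\theta),
\]
where $d_v(\theta)$ is the cost of a least cost augmenting path from NTP $(s,0)$ to NTP $(v,\theta)$ in the residual network of $x$. By Theorem~\ref{thm:CDSP-analytic} each $d_v$ is well defined and piecewise analytic, so the same is true of $\pi_v$. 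One then checks the reduced cost conditions (RC1)--(RC4): if $x_e(\theta)>0$ on an arc $e=(v,w)$, the backward arc $\overleftarrow{e}$ is available at time $\theta+\tau_e$, which combined with any least cost augmenting path to $(w,\theta+\tau_e)$ yields an augmenting walk to $(v,\theta)$, and the no-negative-cycle hypothesis turns this walk into a valid bound on $d_v(\theta)$; the symmetric argument handles (RC2), and (RC3)--(RC4) follow from the analogous storage-arc reasoning, exactly as in the proof of Theorem~\ref{thm:RC}. Finally, (RC1)--(RC4) imply (CS1)--(CS4) via the Jordan decomposition argument already given for Theorem~\ref{thm:RC}, so Theorem~\ref{thm:CS} delivers optimality of $x$.

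The one thing to verify carefully, and what I see as the main obstacle, is that the backward direction of Theorem~\ref{thm:RC} actually only used the absence of negative augmenting cycles, not optimality per se. Inspecting that proof, optimality of $x$ is invoked exactly once, and only to pass through Lemma~\ref{lem:NGOC} to obtain the no-negative-cycle property; every subsequent step (existence and piecewise analyticity of $d_v$, the shortest-path style inequality $d_w(\theta+\tau_e)\leq d_v(\theta)+c_e(\theta)$ used to derive a contradiction, and the storage-side arguments) is valid under the weaker hypothesis. Hence the whole construction transfers, giving a complementary slack $\pi$ directly from the no-negative-cycle assumption, and Theorem~\ref{thm:CS} closes the argument.
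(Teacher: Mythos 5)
Your proposal is correct and follows essentially the same route as the paper: the forward direction is the contrapositive of Lemma~\ref{lem:NGOC}, and the backward direction reruns the construction $\pi_v:=-d_v$ from the proof of Theorem~\ref{thm:RC} (which, as you rightly observe, only uses the absence of negative augmenting cycles) and then appeals to Theorem~\ref{thm:CS}. The paper states this more tersely, but your careful check that optimality enters that earlier proof only through the no-negative-cycle property is exactly the point that makes the reuse legitimate.
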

\begin{proof}
Having proved Lemma~\ref{lem:NGOC}, it is sufficient to prove that $x$ is optimal if there is no negative augmenting cycle under $x$. Hence, assume that there is no negative cycle with respect to $x$. By a similar argument as in the proof of Theorem~\ref{thm:RC}, we can deduce that there is a piecewise analytic functions $\pi$ which is complementary slack with $x$. It now follows from Theorem~\ref{thm:CS} that~$x$ is optimal for~\eqref{pro:MCFP}.
\qed
\end{proof}

\begin{theorem}[Strong Duality]
\label{StrongDuality1}
There exist piecewise analytic optimal solutions $x$ and $\pi$ for \eqref{pro:MCFP} and \eqref{pro:MCFP*}, respectively, so that $V[\text{MCFP},x]=V[\text{MCFP}^*,\pi]$.
\end{theorem}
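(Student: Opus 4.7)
The plan is to combine Theorem~\ref{thm:analytic-flow} with the sufficiency direction of Theorem~\ref{thm:RC} to produce a primal--dual pair of piecewise analytic solutions, and then read off equality of objective values from the complementary slackness machinery already in place.

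First, since $\mathcal{F}\neq\emptyset$, Theorem~\ref{thm:analytic-flow} supplies a piecewise analytic optimal solution $x$ of \eqref{pro:MCFP}. Applying Theorem~\ref{thm:RC} to this $x$ produces a piecewise analytic $\pi$ that is complementary slack with $x$; concretely, the proof of that theorem constructs $\pi_v:=-d_v$, where $d_v(\theta)$ is the cost of a least cost augmenting path from $(s,0)$ to $(v,\theta)$ and is piecewise analytic by Theorem~\ref{thm:CDSP-analytic}. Setting $\rho$ as in~\eqref{eq:rho}, the dual inequality $\pi_v(\theta)-\pi_w(\theta+\tau_e)+\rho_e(\theta)\le c_e(\theta)$ and the sign constraint $\rho_e\le 0$ hold by construction of $\rho_e$, while bounded variation together with right-continuity of each $\pi_v$ follow from piecewise analyticity. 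This gives every dual constraint of \eqref{pro:MCFP*} except the boundary condition $\pi_v(T)=0$, which I would enforce using the artificial super-sink device already invoked after Theorem~\ref{thm:CDSP-existence}: augment the network with an auxiliary node reachable from every $v$ at time $T$ at zero cost and infinite capacity, forcing $d_v(T)=0$ for every $v\in V$ while leaving the augmenting-path structure on the original NTPs untouched.

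Once $\pi$ is dual feasible, Theorem~\ref{thm:CS} finishes the job. The construction in the proof of Theorem~\ref{thm:RC} already establishes the reduced-cost conditions \ref{it:RC1}--\ref{it:RC4}, and the last part of that proof shows these imply \ref{it:CS1}--\ref{it:CS4}; hence $x$ and $\pi$ both satisfy complementary slackness. Theorem~\ref{thm:CS} then declares $x$ optimal for \eqref{pro:MCFP} and $\pi$ optimal for \eqref{pro:MCFP*}. The standard integration-by-parts computation underlying complementary slackness (the one used implicitly in the proof of Theorem~\ref{thm:CS}) turns the CS identities into the equality $V[\eqref{pro:MCFP},x]=V[\eqref{pro:MCFP*},\pi]$, which, combined with Theorem~\ref{thm:WeakDuality}, is precisely the desired strong duality.

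The main obstacle is the careful verification of dual feasibility, especially the normalization $\pi_v(T)=0$ together with the correct handling of the Lebesgue--Stieltjes integral $\int_0^T U_v\,d\pi_v^{-}$ at the (finitely many) jump points of the piecewise analytic $\pi_v$. The piecewise analytic hypothesis, channeled through Theorem~\ref{thm:CDSP-analytic}, is what guarantees that $\pi_v^{-}$ has only finitely many jumps and that all boundary terms produced by integration by parts cancel against the corresponding slackness terms; without it one cannot legitimately convert the pointwise CS conditions into equality of the primal and dual objective values.
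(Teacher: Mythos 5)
Your proposal follows exactly the paper's route: obtain a piecewise analytic optimal $x$ from Theorem~\ref{thm:analytic-flow}, invoke Theorem~\ref{thm:RC} to produce a piecewise analytic $\pi$ complementary slack with $x$, and conclude via Theorem~\ref{thm:CS} together with weak duality. The extra care you take over dual feasibility (the normalization $\pi_v(T)=0$ and the Lebesgue--Stieltjes boundary terms) is a more detailed elaboration of steps the paper leaves implicit, not a different argument.
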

\begin{proof}
We know from Theorem \ref{thm:analytic-flow} that \eqref{pro:MCFP} has a piecewise analytic optimal solution~$x$. Then, it follows from Theorem~\ref{thm:RC} that there exists a piecewise analytic function~$\pi$, which is complementary slack with $x$. The assertion now follows from Theorem~\ref{thm:CS}.
\qed
\end{proof}

\section{Conclusions}
\label{sec:con}
In this paper, we have studied the minimum cost flow problem in a time-varying network to capture temporal features of many real-world problems. In this problem, arc and node costs, arc and node capacities, and supplies and demands can change over time where time is modeled as a continuum. We developed several network-related optimality conditions under the assumption that the input functions are piecewise analytic and the transit times are constant and rational. These results can be used to develop algorithms for solving the minimum cost flow over time problem in a similar way as in static network flows.  For example, Theorem \ref{thm:NCOC} lays the ground for an algorithmic approach like the negative cycle-canceling algorithm for the static minimum cost flow problem. This algorithm maintains a feasible solution at each iteration and successively improves the solution towards optimality. More specifically, the algorithm first establishes an initial feasible solution. It then proceeds by identifying negative augmenting cycles and sending flow along these cycles, while preserving feasibility. The algorithm terminates when the network contains no negative cycle. Theorem \ref{thm:NCOC} implies that when the algorithm terminates it has found an optimal solution. Yet, we have to investigate how to implement the essential steps of such an algorithm. Further details are beyond the scope of this paper and are left for further work.

\section*{Acknowledgment}
We would like to thank two anonymous referees for many useful comments that helped to improve the presentation of the paper. We are grateful to Martin Skutella for valuable suggestions that enhanced this paper. 

\bibliographystyle{plain}
\bibliography{mybib}

\begin{thebibliography}{10}

\bibitem{Anderson78}
E.~J. Anderson.
\newblock {\em A Continuous Model for Job-Shop Scheduling}.
\newblock PhD thesis, University of Cambridge,, 1978.

\bibitem{Anderson89}
E.~J. Anderson.
\newblock Extreme-points for continuous network programs with arc delays.
\newblock {\em Journal of Information and Optimization Sciences}, 10:45--52,
  1989.

\bibitem{AndersonNash87}
E.~J. Anderson and P.~Nash.
\newblock {\em Linear Programming in Infinite-Dimensional Spaces}.
\newblock Wiley, New York, 1987.

\bibitem{AndersonNashPhilpott82}
E.~J. Anderson, P.~Nash, and A.~B. Philpott.
\newblock A class of continuous network flow problems.
\newblock {\em Mathematics of Operations Research}, 7:501--514, 1982.

\bibitem{AndersonPhilpott89}
E.~J. Anderson and A.~B. Philpott.
\newblock A continuous-time network simplex algorithm.
\newblock {\em Networks}, 19:395--425, 1989.

\bibitem{AndersonPhilpott94}
E.~J. Anderson and A.~B. Philpott.
\newblock Optimisation of flows in networks over time.
\newblock In F.~P. Kelly, editor, {\em Probability, Statistics and
  Optimisation}, chapter~27, pages 369--382. Wiley, New York, 1994.

\bibitem{Apostol74}
T.~M. Apostol.
\newblock {\em Mathematical Analysis}.
\newblock 2th Edition, Addison-Wesley, 1974.

\bibitem{CaiKloksWong97}
X.~Cai, T.~Kloks, and C.~K. Wong.
\newblock Time-varying shortest path problems with constraints.
\newblock {\em Networks}, 27:141--149, 1997.

\bibitem{CaiShaWong01}
X.~Cai, D.~Sha, and C.~K. Wong.
\newblock Time-varying minimum cost flow problems.
\newblock {\em European Journal of Operational Research}, 131:352--374, 2001.

\bibitem{FleischerSkut-SICOMP}
L.~Fleischer and M.~Skutella.
\newblock Quickest flows over time.
\newblock {\em SIAM Journal on Computing}, 36:1600--1630, 2007.

\bibitem{FleischerTardos98}
L.~K. Fleischer and \'{E}. Tardos.
\newblock Efficient continuous-time dynamic network flow algorithms.
\newblock {\em Operations Research Letters}, 23:71--80, 1998.

\bibitem{FordFulkerson58}
L.~R. Ford and D.~R. Fulkerson.
\newblock Constructing maximal dynamic flows from static flows.
\newblock {\em Operations Research}, 6:419--433, 1958.

\bibitem{FordFulkerson62}
L.~R. Ford and D.~R. Fulkerson.
\newblock {\em Flows in Networks}.
\newblock Princeton University Press, 1962.

\bibitem{HashNasr11}
S.~M. Hashemi and E.~Nasrabadi.
\newblock On solving continuous-time dynamic network flows.
\newblock {\em Journal of Global Optimization}, 497--524:2012, 2011.

\bibitem{KochNasr10}
R.~Koch and E.~Nasrabadi.
\newblock Continuous-time dynamic shortest paths with negative transit times.
\newblock Technical Report 6-2010, Technische Universit\"{a}t Berlin, 2010.
\newblock under revision for SIAM Journal on Control and Optimization.

\bibitem{KochNasrSkut11}
R.~Koch, E.~Nasrabadi, and M.~Skutella.
\newblock Continuous and discrete flows over time.
\newblock {\em Mathematical Methods of Operations Research}, 73:301--337, 2010.

\bibitem{Miller-HooksPatterson04}
E.~Miller-Hooks and S.~S. Patterson.
\newblock On solving quickest time problems in time-dependent, dynamic
  networks.
\newblock {\em Journal of Mathematical Modelling and Algorithms}, 3:39--71,
  2004.

\bibitem{Miller98}
E.~D Miller-Hooks and H.~S. Mahmassani.
\newblock Least possible time paths in stochastic, time-varying networks.
\newblock {\em Computers \& operations research}, 25(12):1107--1125, 1998.

\bibitem{NasrHash10}
E.~Nasrabadi and S.~M. Hashemi.
\newblock Minimum cost time-varying network flow problems.
\newblock {\em Optimization Methods \& Software}, 25(3):429--447, 2010.

\bibitem{Opasanon06}
S.~Opasanon and E.~Miller-Hooks.
\newblock Multicriteria adaptive paths in stochastic, time-varying networks.
\newblock {\em European Journal of Operational Research}, 173(1):72--91, 2006.

\bibitem{Orda90}
A.~Orda and R.~Rom.
\newblock Shortest-path and minimum-delay algorithms in networks with
  time-dependent edge-length.
\newblock {\em Journal of the ACM}, 37(3):607--625, 1990.

\bibitem{Orda91}
A.~Orda and R.~Rom.
\newblock Minimum weight paths in time-dependent networks.
\newblock {\em Networks}, 21(3):295--319, 1991.

\bibitem{OrdaRom95}
A.~Orda and R.~Rom.
\newblock On continuous network flows.
\newblock {\em Operations Research Letters}, 17:27--36, 1995.

\bibitem{Philpott82}
A.~B. Philpott.
\newblock {\em Algorithms For Continuous Network Flow Problems}.
\newblock PhD thesis, University of Cambridge, UK, 1982.

\bibitem{Philpott85}
A.~B. Philpott.
\newblock Network programming in continuous time with node storage.
\newblock In E.~J. Anderson and A.~B. Philpott, editors, {\em Infinite
  Programming: Proceedings of an International Symposium on Infinite
  Dimensional Linear Programming}, volume 259 of {\em Lecture notes in
  economics and mathematical systems}, pages 136--153. Springer, Berlin, 1985.

\bibitem{Philpott90}
A.~B. Philpott.
\newblock Continuous-time flows in networks.
\newblock {\em Mathematics of Operations Research}, 15:640--661, 1990.

\bibitem{Pullan93}
M.~C. Pullan.
\newblock An algorithm for a class of continuous linear programs.
\newblock {\em SIAM Journal on Control and Optimization}, 31:1558--1577, 1993.

\bibitem{Pullan96}
M.~C. Pullan.
\newblock A duality theory for separated continuous linear programs.
\newblock {\em SIAM Journal on Control and Optimization}, 34:931--965, 1996.

\bibitem{Pullan97MMS}
M.~C. Pullan.
\newblock Existence and duality theory for separated continuous linear
  programs.
\newblock {\em Mathematical Modeling of Systems}, 3:219--245, 1997.

\bibitem{Pullan97}
M.~C. Pullan.
\newblock A study of general dynamic network programs with arc time-delays.
\newblock {\em SIAM Journal on Optimization}, 7:889--912, 1997.

\bibitem{Skutella-Korte09}
M.~Skutella.
\newblock An introduction to network flows over time.
\newblock In W.~Cook, L.~Lov{\'a}sz, and J.~Vygen, editors, {\em Research
  Trends in Combinatorial Optimization}, pages 451--482. Springer, Berlin,
  2009.

\bibitem{Weiss08}
G.~Weiss.
\newblock A simplex based algorithm to solve separated continuous linear
  programs.
\newblock {\em Mathematical Programming}, 115:151--198, 2008.

\bibitem{Wen2013}
L.~Wen, B.~{\c{C}}atay, and R.~Eglese.
\newblock Finding a minimum cost path between a pair of nodes in a time-varying
  road network with a congestion charge.
\newblock {\em European Journal of Operational Research}, 2013.
\newblock to appear.

\end{thebibliography}

\appendix
\section{MCFP with discontinuous cost functions}
\label{sec:appendix}

Throughout Section~\ref{sex:SAP}, for simplicity of notation and clarity of presentation, we have assumed that the cost functions are continuous. This condition ensures that the cost function $c_Q$ is continuous and hence attains it minimum at some point on a closed interval. We have used this fact in Lemma \ref{lem:path-exists-sink} to show that there exists a least cost augmenting path in $\mathcal{P}_{\loc}$. Here, we show that all results in Sections~\ref{sex:SAP} and \ref{sec:OptCond} hold for the case in which cost functions have some discontinuities. To do so, we require to define the cost of an augmenting path and cycle in a different way.

Let $P:(v_1,\theta_{1}),\ldots,(v_q,\theta_{q})$ be an augmenting path from NTP $(s,1)$ to NTP $(t,T)$. We observe that for $k=1,\ldots,q-1$ the following conditions hold:
\begin{enumerate}[label = (\roman*)]
\item\label{it:AugPath1} if $e_k=(v_{k},v_{k+1})\in E^r$, then $u^r_{e_k}$ is not identically zero on any open interval containing $\theta_k$,
\item\label{it:AugPath3} if $v_{k}=v_{k+1}$ and $\theta_{k}<\theta_{k+1}$, then $U^r_{v_{k}}(\theta)>0$ for each $\theta\in (\theta_{k},\theta_{k+1})$,
\item\label{it:AugPath4} if $v_{k}=v_{k+1}$ and $\theta_{k+1}<\theta_{k}$, then $L^r_{v_{k}}(\theta)>0$ for each $\theta\in (\theta_{k+1},\theta_{k})$.
\end{enumerate}
In particular, if $Q~:~(v_\ell,\theta_\ell),\ldots,(v_r,\theta_r)$ is a maximal arc-subpath of $P$, then $P|_Q(\alpha)$ is also an augmenting path for each $\alpha$ in $(\theta_\ell-\epsilon,\theta_\ell)$ or $(\theta_\ell,\theta_\ell+\epsilon)$ for some sufficiently small $\epsilon>0$. Depending on whether  $P|_Q(\alpha)$ is an augmenting path for each $\alpha$ in $(\theta_\ell-\epsilon,\theta_\ell)$, $(\theta_\ell,\theta_\ell+\epsilon)$, or $(\theta_\ell-\epsilon,\theta_\ell+\epsilon)$, we define the cost $c'(Q)$ of $Q$ as
\begin{align}
\label{eq:aug-cost'}
c'(Q)&:=\begin{cases}	
                 \sum_{k=\ell}^{r-1}{c_{e_k}(\theta_{k}-)}& \hbox{if $\alpha$ in $(\theta_\ell-\epsilon,\theta_\ell)$~,}\\
                 \sum_{k=\ell}^{r-1}{c_{e_k}(\theta_{k}+)}& \hbox{if $\alpha$ in $(\theta_\ell,\theta_\ell+\epsilon)$~,}\\
                 \sum_{k=\ell}^{r-1}{\min\{c_{e_k}(\theta_{k}-),c_{e_k}(\theta_{k}+)\}} & \hbox{if $\alpha$ in $(\theta_\ell-\epsilon,\theta_\ell+\epsilon)$~,}
                \end{cases}
\end{align}
where $e_k=(v_k,v_{k+1})$. Notice that $c_{e_k}(\theta_{k}-)$ and $c_{e_k}(\theta_{k}+)$ denote the limit of~$c_{e_k}$ at $\theta_k$ from the left and from the right, respectively, i.e.,
\begin{align*}
c_{e_k}(\theta_{k}-):=\lim_{t\rightarrow \theta_k^-} c_{e_k}(\theta)&& \text{and} && c_{e_k}(\theta_{k}+):=\lim_{t\rightarrow \theta_k^+}c_{e_k}(\theta).
\end{align*}

We now define the \emph{cost} $c'(P)$ of $P$ as $c'(P):=\sum_{Q}c'(Q)$,
where the sum is taken over all maximal arc-subpaths $Q$ of $P$.
Notice that $c'(P)$ is equal to $c(P)$, given by\eqref{eq:cost}, for the case that the cost functions are continuous, but in general it is not the case as shown in the following example.

\begin{example}
\label{ex:AugPath}

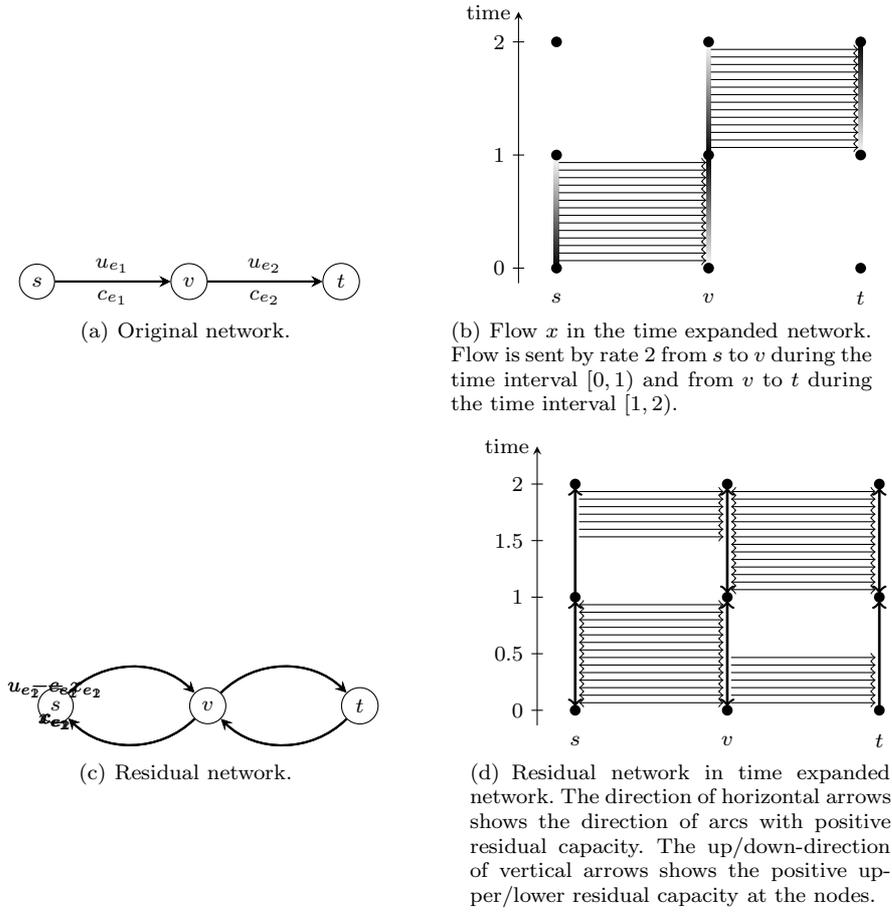
\begin{figure}
    \centering
    \subfigure[\label{fig:ex1Orig}Original network.]{
     \begin{tikzpicture}
      \tikzstyle{node}=[circle, draw]
      \tikzstyle{arc} = [draw, thick, -stealth]
      \foreach \pos/\name in {{(0cm,0cm)/$s$}, {(2cm,0cm)/$v$}, {(4cm,0cm)/$t$}} {
        \node[node] (\name) at \pos {\name};      }
      \foreach \tail / \head / \cap /\cost in {{$s$/$v$/$u_{e_1}$/$c_{e_1}$}, {$v$/$t$/$u_{e_2}$/$c_{e_2}$}} {
        \path[arc] (\tail) --  (\head) node[above,text centered,midway]{\cap}; \path[arc] (\tail) --  (\head) node[below,text centered,midway]{\cost};       }
    \end{tikzpicture}}
        \hspace{1cm}
    \subfigure[\label{fig:ex1Expa}Flow $x$ in the time expanded network. Flow is sent by rate $2$ from $s$ to $v$ during the time interval $[0,1)$ and from $v$ to $t$ during the time interval $[1,2)$. ]{
    \begin{tikzpicture}
      \tikzstyle{node}=[circle, fill, inner sep =.05cm]
      \tikzstyle{arc} = [draw, thick, -stealth]
      \path[draw, -stealth] (-.5cm,-.2cm) --node[at end,auto]{time} (-.5cm,3.4cm);
      \path[draw,] (-.575cm,0cm) -- (-.425cm,0cm) node[at start, anchor=east] {0};
      \path[draw,] (-.575cm,1.5cm) -- (-.425cm,1.5cm) node[at start, anchor=east] {1};
      \path[draw,] (-.575cm,3cm) -- (-.425cm,3cm) node[at start, anchor=east] {2};
      \foreach \pos/\name in {{(0cm,-0.4cm)/$s$}, {(2cm,-0.4cm)/$v$}, {(4cm,-0.4cm)/$t$}} {\node (\name) at \pos {\name};}
      \shade[top color=white,bottom color=white,middle color=white] (0,0) rectangle (2,1.5);
      \shade[top color=black,bottom color=white,middle color=gray] (1.97,0) rectangle (2.03,1.5);
      \shade[top color=white,bottom color=white,middle color=white] (2,1.5) rectangle (4,3);
      \shade[top color=black,bottom color=white,middle color=gray] (3.97,1.5) rectangle (4.03,3);
      \shade[top color=white,bottom color=black,middle color=gray] (1.97,1.5) rectangle (2.03,3);
      \shade[top color=white,bottom color=black,middle color=gray] (-.03,0) rectangle (0.03,1.5);
      \foreach \pos/\name in {{(0cm,0cm)/1}, {(0cm,1.5cm)/2}, {(0cm,3cm)/3}, {(2cm,0cm)/4}, {(2cm,1.5cm)/5}, {(2cm,3cm)/6},
      {(4cm,0cm)/7}, {(4cm,1.5cm)/8}, {(4cm,3cm)/9}} {\node[node] (\name) at \pos {};}
      \foreach \x in {1,...,14}{\draw[black,->] (0.03cm,0.1*\x cm) -- (1.97cm,0.1*\x cm) node[at start, anchor=east] {};}
      \foreach \x in {16,...,29}{\draw[black,->] (2.03cm,0.1*\x cm) -- (3.97cm,0.1*\x cm) node[at start, anchor=east] {};}
     \end{tikzpicture}}
    \subfigure[\label{fig:ex1Back}Residual network.]{
    \begin{tikzpicture}
        \tikzstyle{node}=[circle, draw]
      \tikzstyle{arc} = [draw, thick, -stealth]
      \foreach \pos/\name in {{(0cm,0cm)/$s$}, {(2cm,0cm)/$v$}, {(4cm,0cm)/$t$}} {
        \node[node] (\name) at \pos {\name};      }
      \foreach \tail / \head / \cap / \cost in {{$s$/$v$/$u_{e_1}-x_{e_1}$/$c_{e_1}$}, {$v$/$t$/$u_{e_2}-x_{e_2}$/$c_{e_2}$}} {
        \path[arc] (\tail) to [bend left=45]  (\head) node[above,text centered,midway]{\cap}; \path[arc] (\tail) to [bend left=45]  (\head) node[below,text centered,midway]{\cost};}
      \foreach \tail / \head / \cap / \cost in {{$s$/$v$/$x_{e_1}$/$-c_{e_1}$}, {$v$/$t$/$x_{e_2}$/$-c_{e_2}$}} {
        \path[arc] (\head) to [bend left=45]  (\tail) node[above,text centered,midway]{\cost}; \path[arc] (\head) to [bend left=45]  (\tail) node[below,text centered,midway]{\cap};}
          \end{tikzpicture}}
       \hspace{1cm}
    \subfigure[\label{fig:ex1ResExpa}Residual network in time expanded network. The direction of horizontal arrows shows the direction of arcs with positive residual capacity. The up/down-direction of vertical arrows shows the positive upper/lower residual capacity at the nodes.]{
    \begin{tikzpicture}
      \tikzstyle{node}=[circle, fill, inner sep =.05cm]
      \tikzstyle{arc} = [draw, thick, -stealth]
      \path[draw, -stealth] (-.5cm,-.2cm) --node[at end,auto]{time} (-.5cm,3.5cm);
      \path[draw,] (-.575cm,0cm) -- (-.425cm,0cm) node[at start, anchor=east] {0};
      \path[draw,] (-.575cm,1.5cm) -- (-.425cm,1.5cm) node[at start, anchor=east] {1};
      \path[draw,] (-.575cm,3cm) -- (-.425cm,3cm) node[at start, anchor=east] {2};
      \path[draw,] (-.575cm,0.75cm) -- (-.425cm,0.75cm) node[at start, anchor=east] {0.5};
      \path[draw,] (-.575cm,2.25cm) -- (-.425cm,2.25cm) node[at start, anchor=east] {1.5};
      \foreach \pos/\name in {{(0cm,-0.4cm)/$s$}, {(2cm,-0.4cm)/$v$}, {(4cm,-0.4cm)/$t$}} {\node (\name) at \pos {\name};}
      \foreach \pos/\name in {{(0cm,0cm)/1}, {(0cm,1.5cm)/2}, {(0cm,3cm)/3}, {(2cm,0cm)/4}, {(2cm,1.5cm)/5}, {(2cm,3cm)/6},
      {(4cm,0cm)/7}, {(4cm,1.5cm)/8}, {(4cm,3cm)/9}} {\node[node] (\name) at \pos {};}
      \foreach \x in {1,...,14}{\draw[black,<->] (0.05cm,0.1*\x cm) -- (1.95cm,0.1*\x cm);}
      \foreach \x in {1,...,7}{\draw[black,->] (2.05cm,0.1*\x cm) -- (3.95cm,0.1*\x cm);}
      \foreach \x in {23,...,29}{\draw[black,->] (0.05cm,0.1*\x cm) -- (1.95cm,0.1*\x cm);}
      \foreach \x in {16,...,29}{\draw[black,<->] (2.05cm,0.1*\x cm) -- (3.95cm,0.1*\x cm);}
      \path[draw,,line width=1pt,<->] (0,0.05) -- (0,1.45);
      \path[draw,,line width=1pt,->] (0,1.55) -- (0,2.95);
      \path[draw,,line width=1pt,<->] (2,0.05) -- (2,1.45);
      \path[draw,,line width=1pt,<->] (2,1.55) -- (2,2.95);
      \path[draw,,line width=1pt,->] (4,0.05) -- (4,1.45);
      \path[draw,,line width=1pt,<->] (4,1.55) -- (4,2.95);
     \end{tikzpicture}}
\caption{\label{fig:AugPath}Network for Example~\ref{ex:AugPath}.}
\end{figure}

We consider the network shown in Fig.~\ref{fig:ex1Orig}.
The transit costs and transit capacities are as follows:
\begin{align*}
c_{e_1}(\theta)&=\begin{cases}
            1&\hbox{$0\leq t< 0.5$,}\\
            2&\hbox{$0.5\leq t< 1$,}\\
            1&\hbox{$1\leq t\leq 2$,}
            \end{cases}&
c_{e_2}(\theta)&=\begin{cases}
            1&\hbox{$0\leq t< 1$,}\\
            2&\hbox{$1\leq t< 1.5$,}\\
            1&\hbox{$1.5\leq t\leq 2$,}
            \end{cases}\\
u_{e_1}(\theta)&=\begin{cases}
            4t&\hbox{$0\leq t< 1$,}\\
            0&\hbox{$1\leq t< 1.5$,}\\
            1&\hbox{$1.5\leq t\leq 2$,}
            \end{cases}&
u_{e_2}(\theta)&=\begin{cases}
            1&0\leq t< 0.5,\\
            0&0.5\leq t< 1,\\
            4t-2&1\leq t\leq 2.
            \end{cases}
\end{align*}
The storage capacities are given as
\begin{align*}
U_s(\theta)=\infty,&& U_v(\theta)=1,&& U_t(\theta)=\infty,&& \theta\in [0,2].
\end{align*}
The transit times and storage costs are assumed to be zero. The problem is to send an initial storage of one
unit from node 1 to node 3 within the time interval~$[0,2]$. One possible solution $x$ is obtained as follows. We send flow into arc $e_1$ with rate $2\theta$ within the interval $[0,1)$. The flow arriving at node $2$ is stored there till time $1$. So there will be one unit of flow at node $2$ at time $1$. We send this amount of flow into arc $e_2$ with rate $2\theta-2$ within the interval $[1,2]$. Fig.~\ref{fig:ex1Expa} shows the flow~$x$ in the corresponding time-expanded network.
Formally, $x$ is given by
\begin{align*}
x_{e_1}(\theta)&=\begin{cases}
            2\theta&\theta\in [0,1),\\
            0&\theta\in [1,2),
            \end{cases}&
x_{e_2}(\theta)&=\begin{cases}
            0&\theta\in [0,1),\\
            2\theta-2&\theta\in [1,2],
            \end{cases}
\end{align*}
with corresponding storage
\begin{align*}
Y_{s}(\theta)&=\begin{cases}
            1-\theta^2&\theta\in [0,1),\\
            0&\theta\in [1,2),
            \end{cases}&
Y_{v}(\theta)&=\begin{cases}
            \theta^2&\theta\in [0,1),\\
            2\theta-\theta^2&\theta\in [1,2),
            \end{cases}&
Y_{w}(\theta)&=\begin{cases}
            0&\theta\in [0,1),\\
            (\theta-1)^2&\theta\in [1,2).
            \end{cases}
\end{align*}

We are now interested in identifying the augmenting paths and augmenting cycles. Fig.~\ref{fig:ex1Back} depicts the network with backward arcs and Fig.~\ref{fig:ex1ResExpa} depicts the paths and cycles with positive residual capacities in the corresponding time-expanded network. However, there are more augmenting paths and cycles in addition to those shown in Fig.~\ref{fig:ex1ResExpa}, whose residual capacities are zero. Some of them are given below
\begin{align*}
P_1:&(s,0),(2,0),(3,0),(3,2),\\
P_2:&(s,0),(2,0),(2,0.5),(3,0.5),(3,2),\\
P_3:&(s,0),(1,0.5),(2,0.5),(3,0.5)(3,2),\\
P_4:&(s,0),(1,1.5),(2,1.5),(2,2),(3,2),\\
P_5:&(s,0),(1,2),(2,2),(3,2)\\
W_1:&(s,0),(1,2),(2,2),(3,2),(3,1),(2,1),(2,0),(s,0),\\
W_2:&(1,1),(1,2),(2,2),(3,2),(3,1.5),(2,1.5),(2,1),(1,1),\\
W_3:&(1,0.5),(1,2),(2,2),(3,2),(3,1.5),(2,1.5),(2,0.5),(1,0.5),
\end{align*}
with costs
\begin{align*}
c(P_1)=2,	&& c(P_2)=2,		&&c(P_3)=2,	&& c(P_4)=2,		&&c(P_5)=2,\\
c'(P_1)=2,	&& c'(P_2)=1,	&&c'(P_3)=2,	&& c'(P_4)=2,	 &&c'(P_5)=2,
\end{align*}
and
\begin{align*}
c(W_1)=-1,	&& c(W_2)=-2,		&&c(W_3)=-2,\\
c'(W_1)=-1,	&& c'(W_2)=1,	&&c'(W_3)=-1.
\end{align*}
We observe that the equality $c(P)=c'(P)$ does not hold for some path or cycle $P$.

As mentioned already above, an augmenting path (or cycle) must satisfy the conditions \eqref{it:AugPath1}-\eqref{it:AugPath4}. But the other direction may not hold, that is,
a path satisfying these conditions is not necessarily an augmenting path in general. The following paths and cycles show this fact:
\begin{align*}
P_6:&(s,0),(2,0),(2,1),(3,1),(3,2),\\
P_7:&(s,0),(1,1),(2,1),(3,1),(3,2),\\
W_4:&(1,1),(1,2),(2,2),(3,2),(3,1),(2,1),(1,1).\\
\end{align*}

\end{example}

Now let $Q~:(v_\ell,\theta_\ell),\ldots,(v_r,\theta_r)$ be an arc-subpath of $P$. We have observed that there exists an (inclusion-wise) maximal closed interval $[a,b]$, say, containing $\theta_\alpha$ so that the path $P|_Q(\alpha)$, given by \eqref{eq:P(alpha)}, is an augmenting path for every $\alpha\in [a,b]$. We define a cost function $c'_Q:[a,b]\rightarrow \rr$ with respect to $Q$ as
\begin{align*}
c'_Q(\alpha)&:=\begin{cases}
        	\sum_{k}^{\ell}{c_{e_{k}}(\alpha_{k}-)}& \alpha=v,\\
		\sum_{k}^{\ell}\min\{c_{e_{k}}(\alpha_{k}-),c_{e_{k}}(\alpha_{k}+)\}& \alpha\in (u,v),\\
		\sum_{k}^{\ell}{c_{e_{k}}(\alpha_{k}+)}& \alpha=u,
             \end{cases}
\end{align*}
where $e_k=(v_k,v_{k+1})$. We recall that $\alpha_\ell=\alpha$ and $\alpha_{k+1}(\alpha)=\alpha_{k}+\tau_{e_{e_k}}$ for $k=\ell,\ldots,r-1$. For the case that $u=v=t_\ell$, we define $c'_Q(\alpha):=c'(Q)$, where $c'(Q)$ is given by \eqref{eq:aug-cost'}. The function $c'_Q$ is lower semi-continuous at any point $\alpha\in [a,b]$ and such a function attains its local minimum on a closed interval. This fact shows that Theorem~\ref{thm:CDSP-existence} and \ref{thm:CDSP-analytic} hold.

We next proceed to show that the results in Section~\ref{sec:OptCond} are still valid. We consider a feasible flow $x$ and suppose that $W: (v_1,\theta_1), \ldots, (v_q,\theta_q)$ is an augmenting cycle. We have defined the cost of $W$ in two different ways: $c'(W)$ in terms of the cost of arc-subpaths of $W$ and $c(W)$ as the sum of the costs of the arcs at the times they appear around the cycle $W$. Further, we have observed that these two values are not equal in general. However, we have the following result taht provides another characterization of augmenting cycles with negative cost.

\begin{lemma}
\label{lem:neg-aug-cycle}
Let $x$ be a piecewise analytic flow. The network $G$ contains a negative augmenting cycle if and only if there is a cycle $W$ with $\ca(W)>0$ and \mbox{$c'(W)<0$}.
\end{lemma}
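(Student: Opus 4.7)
The reverse direction is immediate: if $\ca(W)>0$ then $W$ is itself a strict augmenting cycle, and since $W$ lies in its own $\epsilon$-neighborhood for every $\epsilon>0$, $W$ is an augmenting cycle in the sense of the paper. Together with $c'(W)<0$ this makes $W$ a negative augmenting cycle, so the right-hand side implies the left-hand side.

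For the forward direction, let $W$ be a negative augmenting cycle with maximal arc-subpaths $Q_1,\ldots,Q_m$, where $Q_i$ starts at NTP $(v_{\ell_i},\theta_{\ell_i})$. The plan is to shift each $Q_i$ slightly, obtaining a cycle $W'$ with $\ca(W')>0$ and $c'(W')<0$. For each $Q_i$, the definition of $c'(Q_i)$ in \eqref{eq:aug-cost'} selects one of three cases according to which shift directions produce an augmenting path: left only, right only, or both. Pick $\alpha_i$ strictly on the admissible side of $\theta_{\ell_i}$ and arbitrarily close to it. Because the residual capacities $u^r_{e_k}$ and the residual storage functions $U^r_v,L^r_v$ are piecewise analytic, the augmenting condition on an open one-sided neighborhood of $\theta_{\ell_i}$ forces these residual capacities to be strictly positive on that neighborhood (away from $\theta_{\ell_i}$ itself). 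Hence $W|_{Q_i}(\alpha_i)$ uses arcs whose residual capacities are strictly positive at the shifted times, and the adjacent storage periods keep strictly positive residual storage. By the very definition of the one-sided limits in \eqref{eq:aug-cost'},
\begin{align*}
\lim_{\alpha_i\to \theta_{\ell_i}}c_{Q_i}(\alpha_i) \;=\; c'(Q_i),
\end{align*}
so for any $\delta>0$ we may pick $\alpha_i$ close enough to $\theta_{\ell_i}$ that $c_{Q_i}(\alpha_i)<c'(Q_i)+\delta/m$.

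Because maximal arc-subpaths are separated by node-links, shifting $Q_i$ affects only the storage periods at its two endpoints; it leaves the arc times of the other $Q_j$'s untouched. Performing all shifts simultaneously, with each $|\alpha_i-\theta_{\ell_i}|$ small enough relative to the others, yields a cycle $W'$ whose arcs all have strictly positive residual capacity and whose storage intervals all enjoy strictly positive residual storage, i.e.\ $\ca(W')>0$. Summing the cost estimates gives
\begin{align*}
c(W') \;=\; \sum_{i=1}^{m}c_{Q_i}(\alpha_i) \;<\; \sum_{i=1}^{m}c'(Q_i)+\delta \;=\; c'(W)+\delta,
\end{align*}
which is negative once we fix $\delta<-c'(W)/2$. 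Finally, since $\ca(W')>0$, every maximal arc-subpath $Q$ of $W'$ admits perturbation in both directions, placing us in the third case of \eqref{eq:aug-cost'}; hence $c'(Q)\le c(Q)$ termwise, and therefore $c'(W')\le c(W')<0$, which is what we needed.

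The principal obstacle is coordinating the individual shifts: perturbing $Q_i$ simultaneously alters the storage windows flanking both $Q_i$ and the next arc-subpath, and in the discontinuous cost setting one must carefully pick the correct one-sided direction dictated by \eqref{eq:aug-cost'}. Piecewise analyticity of all data keeps these perturbation budgets bounded away from zero, so choosing the $\delta_i$ small enough to respect every storage interval and every residual capacity remains feasible; the one-sided cost limits then deliver the required negativity of $c'(W')$.
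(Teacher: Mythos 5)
Your overall strategy for the forward direction --- shift each maximal arc-subpath of the negative augmenting cycle slightly to an admissible side so that all residual capacities become strictly positive, while controlling the cost through one-sided limits --- is the same as the paper's. However, your central cost estimate has a genuine gap. You claim that $\lim_{\alpha_i\to\theta_{\ell_i}}c_{Q_i}(\alpha_i)=c'(Q_i)$, but this fails exactly in the third case of \eqref{eq:aug-cost'}, which is the case the appendix introduces specifically to handle discontinuities. There $c'(Q_i)=\sum_k\min\{c_{e_k}(\theta_k-),c_{e_k}(\theta_k+)\}$ takes the minimum \emph{arc by arc}, whereas any shift $\alpha_i$ moves all arcs of $Q_i$ coherently in one direction, so the only attainable limits of $c_{Q_i}(\alpha_i)$ (defined via \eqref{eq:cost-arc-path}) are $\sum_k c_{e_k}(\theta_k-)$ and $\sum_k c_{e_k}(\theta_k+)$; each of these can strictly exceed the termwise-minimum sum, for instance when two arcs of $Q_i$ have costs jumping in opposite directions at their respective times. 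Hence the bound $c_{Q_i}(\alpha_i)<c'(Q_i)+\delta/m$ cannot be guaranteed and the chain $c(W')<c'(W)+\delta<0$ collapses. The paper's proof sidesteps this by anchoring the comparison to $c(Q)$ from \eqref{eq:cost}, i.e.\ to the pointwise (right-limit) values of the original cycle, and establishing $|c'(Q(\alpha))-c(Q)|<\epsilon$; in effect its forward direction starts from a cycle that is negative with respect to $c$, whereas yours starts from $c'(W)<0$ and then needs an approximation of the termwise-min quantity that no single shift can deliver.

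A second, smaller slip: you assert that $\ca(W')>0$ places every maximal arc-subpath of $W'$ in the two-sided case of \eqref{eq:aug-cost'}. Piecewise analytic residual capacities are only right-continuous, so strict positivity at the shifted times guarantees admissibility of right shifts only; a jump of $u^r_e$ up from $0$ can block the left side entirely. Your conclusion $c'(Q)\le c(Q)$ survives, because the right-only case gives equality (by right-continuity of the cost functions) and the two-sided case gives the inequality, but the justification must go through right-continuity rather than two-sidedness. Your reverse direction is immediate under the appendix's reading of the cost of an augmenting cycle as $c'$; note that the paper additionally argues that $c(W)<0$ there, so that $W$ is negative also in the sense of the definition of Section 3, a step you omit.
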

\begin{proof}
Suppose first that $W$ is a cycle with $\ca(W)>0$ and $c'(W)<0$. Clearly $W$ is an augmenting cycle since $\ca(W)>0$. So we need to show that $c(W)<0$. Recall that $c(W)=\sum_{Q} c(Q)$ where sum is taken over all maximal arc-subpaths $Q$ of~$W$ and $c(Q)$ is computed by \eqref{eq:cost}. For each maximal arc-subpath $Q:(v_\ell,\theta_\ell),\ldots,(v_r,\theta_r)$ of $W$, we define $c'(Q):=\sum_{k:(v_k,v_{k+1})\in E^r} c_{v_k,v_{k+1}}(\theta_{k})$ where the index $k$ varies from~$\ell$ to $r-1$. The fact that $\ca(W)>0$ implies that there exists some $\epsilon>0$ so that $W|_Q(\alpha)$ is an augmenting cycle for each $\alpha\in (\theta_\ell,\theta_\ell+\epsilon)$. Due to the definition of $c(Q)$ and $c'(Q)$ and the fact that cost functions are right-continuous, we can conclude $c(Q)\leq c'(Q)$. Therefore, $c(W)\leq c'(W)$ which gives the result in one direction.

To prove the other direction, suppose that $W$ is a negative augmenting cycle. Let $Q:(v_\ell,\theta_\ell),\ldots,(v_r,\theta_r)$ be a maximal arc-subpath of $W$. Then we know that there is some $\epsilon>0$ such that $W|_Q(\alpha)$ is also an augmenting cycle for each $\alpha$ in $(\theta_\ell-\epsilon,\theta_\ell)$ or $(\theta_\ell,\theta_\ell+\epsilon)$. We assume without loss of generality that $W|_Q(\alpha)$ is an augmenting cycle for each $\alpha$ in $(\theta_\ell-\epsilon,\theta_\ell)$. Then for some $\alpha\in (\theta_\ell-\epsilon,\theta_\ell)$, we have $|c'(Q(\alpha))-c(Q)|<\epsilon$. Here $Q(\alpha)$ denotes the arc path $(v_\ell,v_{\ell+1}),\ldots, (v_{r-1},v_r)$ with starting time $\alpha$. More precisely, we have $Q(\alpha):(v_\ell,\alpha_\ell),\ldots,(v_r,\alpha_r)$ where $\alpha_\ell=\alpha$ and $\alpha_{k+1}(\alpha)=\alpha_{k}+\pi_{v_{k},v_{k+1}}$ for $k=\ell,\ldots,r-1$.
Further, $\alpha$ can be chosen in such a way that $\ca(Q(\alpha))>0$. Now we consider the cycle $W|_Q(\alpha)$ and repeat the above procedure for all remaining maximal arc-subpaths of $W$. Let $W'$ be the resulting cycle. It is easy to see that for sufficiently small $\epsilon>0$, we get $\ca(W')>0$ and $c'(W')<0$.
\qed
\end{proof}

This lemma  implies that all results in Section~\ref{sec:OptCond} remain valid even if the cost functions have some discontinuities.

\end{document}